\newtheorem{theorem}{Theorem}
\newtheorem{proposition}{Proposition}
\newtheorem{remark}{Remark}
\newcommand{\bee}{\begin{eqnarray}}
\newcommand{\eee}{\end{eqnarray}}
\newcommand{\be}{\begin{eqnarray*}}
\newcommand{\ee}{\end{eqnarray*}}
\newcommand{\R}{{\mathbb R}}
\newcommand{\N}{{\mathbb N}}
\newcommand{\C}{{\mathbb C}}
\newcommand{\Z}{{\mathbb Z}}
\newcommand{\cn}{\mbox {\rm cn}}
\newcommand{\KEll}{{\mathcal K}}
\newcommand{\EEll}{{\mathcal E}}
\newcommand{\Am}{{\mathcal A}}
\newcommand{\SP}{{\mathcal P}}
\newcommand{\en}{\omega}
\newcommand{\En}{\Omega}
\begin{document}
 
 \title [Nonlinear Winter's model]{Quantum resonances and analysis of the survival amplitude in the nonlinear Winter's model}
 
 \author {
 Andrea Sacchetti
 }

\address {
Department of Physics, Informatics and Mathematics, University of Modena and Reggio Emilia, Modena, Italy.
}

\email {andrea.sacchetti@unimore.it}

\date {\today}

\thanks {This work is partially supported by the GNFM-INdAM and the UniMoRe-FIM project  ``Modelli e metodi della Fisica Matematica''.}

\begin {abstract} In this paper we show that the typical effects of quantum resonances, namely, the exponential-type decay of the  survival amplitude, continue to exist even when a nonlinear perturbative term is added to the time-dependent Schr\"odinger equation. \ The difficulty in giving a rigorous and appropriate definition of quantum resonances by means of the notions already used for linear equations is also highlighted. 

\medskip

PACS number(s): {03.65.-w, 03.65.Db, 03.65.Xp, 03.75.Lm, 05.45.-a}

MSC 2020 number(s): {81Qxx}

Declaration of competing interest: the author declares that he has no known competing financial interests or personal 
relationships that could have appeared to influence the work reported in this paper.

Data availability: No data was used for the research described in the article.

\end{abstract}

\keywords {Winter's model, $\delta$-shell model, Survival amplitude, Nonlinear Schr\"odinger equation, Quantum resonances.}

\maketitle

\section {Introduction}\label {S1}

It is a well known fact that the introduction of a nonlinear perturbation dramatically changes the qualitative behavior of a linear model. \ For example, in classical mechanics, the Duffing equation is a second-order nonlinear equation that describes the motion of a damped oscillator with a cubic perturbation, and the associated dynamical system exhibits a much more complicated behavior than the linear one with (classical) jump resonance phenomena, chaotic dynamics and hysteresis effects \cite {KB}.

Even in quantum mechanics the introduction of a nonlinear term involves a whole range of new problems: from, e.g.,  proving the local and global existence of solutions to the appearance of blow-up phenomena, and most of them have been extensively studied \cite {SS}. \ In contrast, the effect of nonlinear perturbation on quantum resonances is much less understood and studied. 

Within the framework of one-dimensional Schr\"odinger's linear equation
\be
i \partial_t \psi_t = H \psi_t \, , \psi_t \in L^2 (\R ,dx) \, , 
\ee
where $H$ is the linear Schr\"odinger operator, quantum resonances (see \cite {Z} for a review) are associated with metastable states that are not really bound because they usually correspond to states confined by a barrier, through which tunneling occurs. \ The physical effect of quantum resonances can be seen when we consider the time behavior of the \emph {survival amplitude} ${\mathcal A}(t)$ defined as the scalar product between the initial wave function $\psi_0$ and the wave function $\psi_t$ of the quantum system at instant $t$:
\be
\Am (t) :=   \langle \psi_0 , \psi_t \rangle   \, . 
\ee
The \emph {survival probability} is defined as  the
square of the absolute value of the survival amplitude (sometimes in the literature,
with abuse of notation, both objects are named survival probability):
\be
\SP (t) = \left | \Am (t) \right |^2 \, .
\ee

When the pure point spectrum of the  
Schr\"odinger operator $H$ is nonempty and $\psi_0$ is an eigenvector of $H$ then $\SP (t) \equiv 1$ for any $t$; in general, $\SP (t)$ does not go to zero as $t$ goes to infinity if $\Pi_{p} \psi_0 \not= 0$, where $\Pi_{p}$ denotes the projection operator on the pure point eigenspace. \ On the other side, when the pure point spectrum of $H$ is empty or $\Pi_p \psi_0 =0$ we expect to observe an exponentially decreasing behavior for $\SP (t)$ because of the occurrence of quantum resonances (if any). \ In fact, we expect that, after a very short time, the survival amplitude has the following asymptotic behavior \cite {H,NSZ}
\bee
\Am (t) \sim e^{-it E}\label {Formula18}
\eee
when $\psi_0$ is a normalized state approximating the quantum resonance state associated to the resonance energy $E$ such that $\Im E <0$. \ However, we should note that Simon \cite {Simon1} (see also \cite {Ha,HM}) pointed out that exponentially decreasing behavior is dominant for large times only when the Schr\"odinger operator $H$ is not bounded from below. \ In fact, in the case of Schr\"odinger operators $H$ bounded from below, we expect to observe a time decay for the survival amplitude of the form
\bee
\Am (t) \sim e^{-it E}+ b(t) \label {Formula19}
\eee
where the remainder term $b(t)$ becomes the dominant one for very large times. \ Indeed, the time behavior of the survival amplitude is governed by two terms: one term, due to quantum resonances, has exponentially decaying behavior and it is  the dominant one for short times; the second one, due to typical dispersive effects, has an inverse power of the time law and it becomes dominant for longer times. 

The analysis of the problem of the exponential decay rate versus the power
decay rate in the time-dependent survival amplitude $\Am (t)$ has been a research
topic since the '50 and experimental evidence of the deviation from exponential decay has been observed too \cite {W}. \ In the Winter's seminal paper  \cite {Wi} it was numerically conjectured that a transition effect between the two different types of decay starts around a certain instant $t$. \ Recently, a more rigorous analysis of Winter's model (also called single delta-shell model in literature), consisting of a one-dimensional model (see Figure \ref {Fig_1}) with one Dirac's delta potential at $x=a > 0$ and Dirichlet boundary condition at $x = 0$, has been done \cite {AgSa1,AgSa2,de,E,GMV,GL,GVHR,GR2,PVZ,RaK,Wy}. \ Furthermore, Winter-like models, in which a more general singular potential is
considered, have been recently studied, see e.g. \cite {AC,EF}. \ We would remind that the use of a Dirac's $\delta$ function representing a well or barrier potential has been suggested for the first time by E. Fermi \cite {Fermi} in 1936; since then this idea has been widely used to propose simple and solvable models in quantum mechanics \cite {Al}. \ We also obsreve  that the use of Dirac's $\delta$ potentials has been recently introduced to model the effect of nodes in starlike graphs \cite {AN,ACFN,CFN,KND}.
\begin{figure}
\begin{center}
\includegraphics[height=6cm,width=6.25cm]{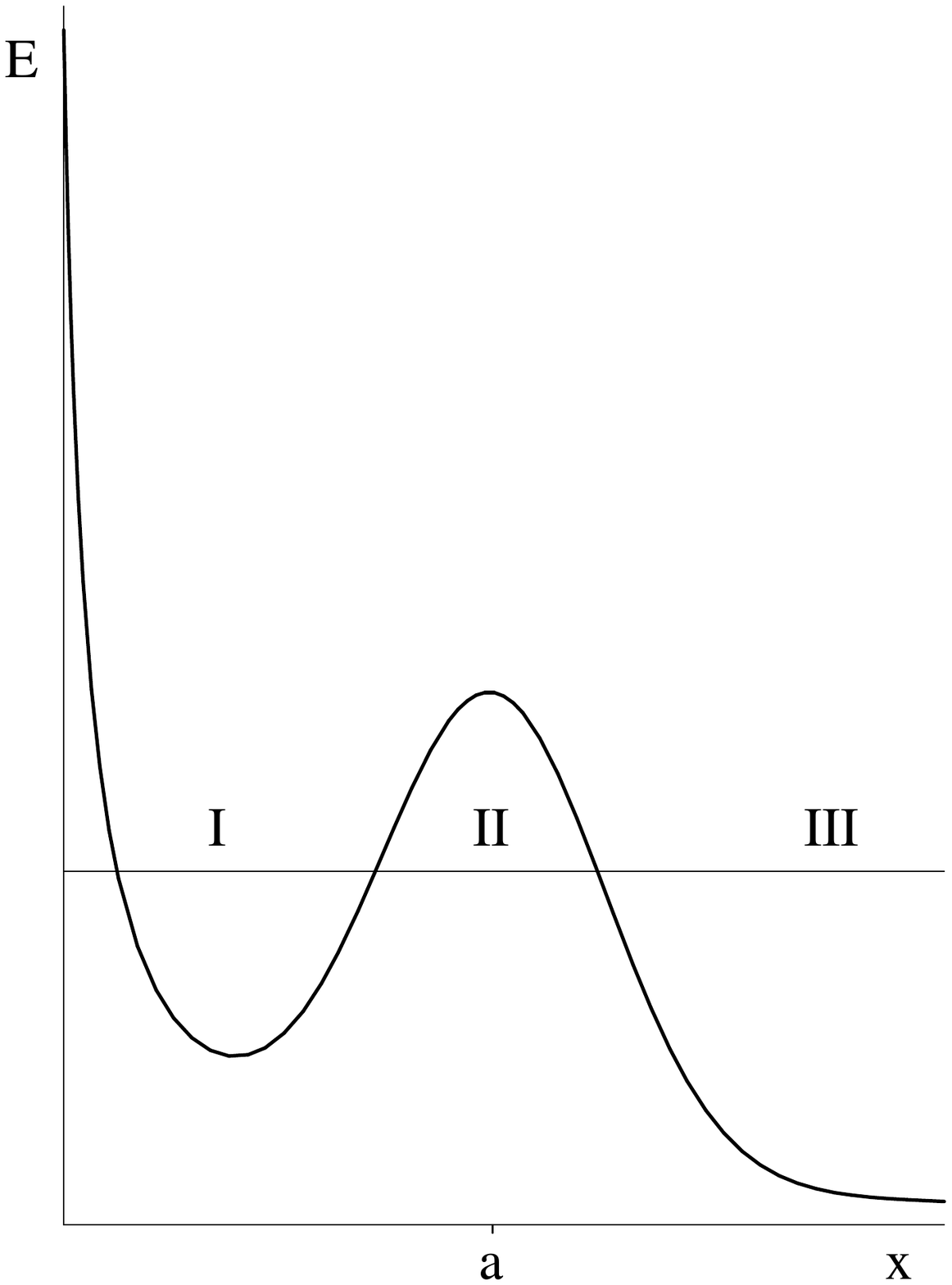}
\includegraphics[height=6cm,width=6.25cm]{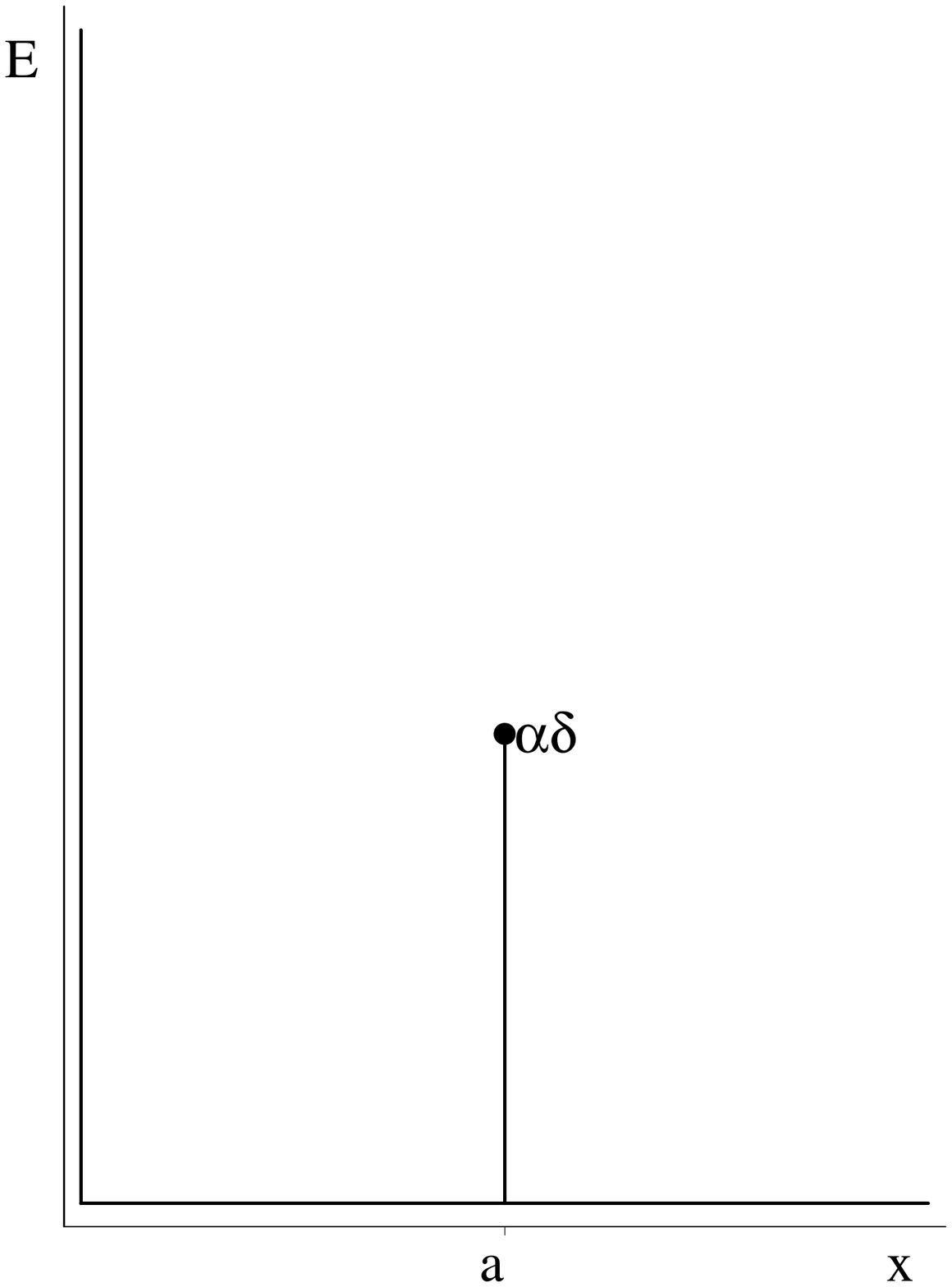}
\caption{A typical potential for the alpha-nucleus tunneling problem is depicted in the left image; for a given energy $E$ the region $I$ indicates the well, while the region $II$ indicates the internal barrier with top at $x=a>0$. \ In fact, in quantum mechanics a particle can move, with a finite nonzero probability, from region $I$ to region $III$ even if these regions are separated by a potential barrier. \ A toy model is depicted in the right image where the barrier is represented by a pointwise $\delta$ Dirac interaction at $x=a$, with strength $\alpha>0$ and an infinite barrier is also present at $x=0$; and this model is usually referred to as Winter's model.}
\label {Fig_1}
\end{center}
\end{figure}

Extending this analysis to the case of nonlinear Schr\"odinger equations (hereafter NLS) raises interesting questions: 

\begin {itemize} 

\item [-] How can we define quantum resonances in the case of nonlinear equations?

\item [-] Is the behavior of the solution $\psi_t$, and in particular of its survival amplitude ${\mathcal A}(t)$, affected by the existence of quantum resonances? And if so, in what way?

\end {itemize}
These questions are still largely unexplored, and the purpose of this paper is to shed some light on this complex problem and to 
try to understand some basic facts from the analysis of a simple explicit model.  \ Indeed, while the extension of the notion of stationary states from linear to nonlinear Schr\"odinger equations is fairly well understood, the extension of quantum resonances and the connection of quantum resonances, if properly defined, with the time decay of the survival amplitude are, on the other hand, far from being fully understood. 

The one-dimensional nonlinear Schr\"odinger equation we consider is the Gross-Pitaevskii one with cubic nonlinearity:
\be
i \partial_t \psi_t = H \psi_t + \eta |\psi_t |^2 \psi_t \, , \eta \in \R \, , \ \psi_t \in L^2 (\R ,dx) \, , 
\ee
One may, in principle, consider the case of quintic nonlinearity, where the nonlinearity term is given by $\eta |\psi_t|^4 \psi_t$, or any power nonlinearity, where the nonlinearity term is given by $\eta |\psi_t|^{2\sigma} \psi_t$ for some $\sigma>0$. \ We don't dwell here on these other models and we simply restrict ourselves to the cubic nonlinearity where $\sigma=1$.

The absolute value of $\psi_t$ in the nonlinear term $\eta |\psi_t|^2 \psi_t$ plays a crucial role in order to prove the conservation of the norm and of the energy. \  In fact, the nonlinearity arising from the square of the absolute value of the wavefunction inhibits superposition principle as well as the analytical properties that connect the quantum resonances to the time behavior of the survival amplitude in linear equations.  \ Furthermore, explicit solutions to the time-independent nonlinear Schr\"odinger equation are only known when the nonlinear term has the form $\eta \psi_t^3$. \ Although it is correct to replace, for the purpose of finding the stationary solutions associated with real values of energy, the term $\eta |\psi_t |^2 \psi_t$ with the term $\eta \psi_t^3$ this substitution is no longer permissible in the study of quantum resonances because in that case complex values of energy must be considered. \ Also in the study of survival amplitude the original term $\eta |\psi_t |^2 \psi_t$ must be retained.

A first crucial question concerns how quantum resonances can be defined for NLS problems and whether this notion makes sense in NLS. \ Several proposals have been given in literature \cite
{CMBS,MCMB,MC,M,RWK,RK,R,SP,SW,WMK,WRK} making use of complex scaling arguments, Siegert's approximation method and scattering coefficient analisys. \ In particular, we critically review these definitions of quantum resonances in NLS and point out that the two methods based on complex scaling and Siegert's approximation have, in our opinion, serious problems. \ On the other hand, the method of defining quantum resonances by scattering coefficient analysis  can be applied in principle, as in the linear model, but the link between the resonances, associated with the maximum values of the scattering coefficient, and time decay of the survival amplitude still remains completely vague. 

Our aim is to extend the analysis of the Winter's model when considering a nonlinear perturbation of the kind $\eta |\psi_t |^2 \psi_t$; where $\eta \in \R$ is the strength of the nonlinear perturbation and it may assume either positive (in which case we speak of \emph {defocusing} or \emph {repulsive} nonlinearity) and negative (in which case we speak of \emph {focusing} or \emph {attractive} nonlinearity) values. \ Nonlinear Winter's problem has been considered by several authors, e.g. \cite {M,RK,R,WMK,WRK} where several attempts to define quantum resonances in such a model are reported. \ In this paper we'll  numerically show that the time decay of the survival amplitude is really affected by the nonlinearity strength. \ In particular, it can be seen by numerical experiments that the typical exponential  decay associated to quantum resonances persists even in this model as long as $\eta$ is not smaller than a critical negative value $\tilde \eta$, and it becomes faster for increasing positive values of $\eta$; if $\eta$ becomes smaller than the critical value $\tilde \eta$ then new stationary states of the nonlinear equation arise and the survival amplitude does not decay. \ Furthermore, we can also conjecture that the quantum resonances obtained in the linear model become stationary states for the nonlinear one when the nonlinearity strength takes the negative threshold value $\tilde \eta $. 

The paper is organized as follows. \ In Section \ref {S2} we consider Winter's linear model in detail; in particular we recall the spectral properties and the expression of the resolvent operator, calculate the quantum resonances, give the expression of the evolution operator and finally, in a numerical experiment, calculate the survival amplitude ${\mathcal A}(t)$. \ Some of these results have been previously given also in other papers \cite {AgSa1,AgSa2,de,E,GMV,GL,GVHR,GR2,PVZ,RaK,Wy}. \ In Section \ref {S3} we then consider  Winter's nonlinear model where we calculate the stationary solutions showing that a bifurcation phenomenon occurs and where we critically review the definition of quantum resonances. \ Finally, in Section \ref {Sec4}, by means of numerical experiments, we show that the survival amplitude does indeed  depend on the nonlinearity strength $\eta$ and  we draw some concluding remarks. \ In Appendix \ref {App1} we provide the technical proof of the Theorem \ref {Teo1}.

Concerning notation: $\| \psi \|$ denotes the usual norm in $L^2$; by $\omega$ we denote the eigenvalue and resonance energy for the linear Winter's model considered in \S \ref {S2}; in \S \ref {S3}, where we study the nonlinear Winter's model, the energy is denoted by $\Omega$ instead of $\omega$.

Normalization to $1$ of the wavefunction $\psi_t$ is assumed valid throughout the whole paper with the exception of \S \ref {Ss34} where we analyze the notion of quantum resonances for NLS. \ We recall that for linear Schr\"odinger equations the value $c>0$ of the normalization condition $\| \psi \| =c$ does not matter because we can always reduce it to $1$ by means of a simple scaling $\psi \to \psi /c$. \ On the other hand, in the case of NLS the scaling $\psi \to \psi /c$ implies that the nonlinearity strength $\eta$ must change as $\eta \to c^2 \eta$. \ Thus, if we decide to change the normalization condition then we must take care of the fact that the nonlinearity strength changes too. \ In \S \ref {Ss34} we consider general solutions to the NLS that are not in $L^2$ and thus we cannot assume the usual normalization condition $\int_0^{+\infty} | \psi (x) |^2 dx =1$. \ We in fact assume a different normalization condition $\int_0^{a} | \psi (x) |^2 dx =1$ and for this reason we denote only in this section the nonlinearity strength by $\Gamma$ instead of $\eta$ as usually do in the rest of the paper.

\section {Analysis of the linear Winter's model}\label {S2}

In this Section we consider the one-dimensional time dependent linear Schr\"odinger equation 
\bee
\left \{
\begin {array}{l}
i\dot \psi_t = H_\alpha \psi_t  \\
\left. \psi_t \right |_{t=0} = \psi_0 
\end {array}
\right.  \, , \ \psi_t \in L^2 (\R^+ ) \, , \ \| \psi_0 \| =1 \, ,  \label {Formula2}
\eee
where
\bee
H_\alpha = - \frac {\partial^2}{\partial x^2} + V  \ \mbox { and } \ 
V(x) = 
\left \{
\begin {array}{ll}
+\infty & \mbox { if } x < 0 \\
\alpha \delta (x-a) & \mbox { if } x \ge 0 
\end {array}
\right. \label {Formula3}
\eee
for some $a>0$ and $\alpha \in \R \cup \{ + \infty \}$, $\alpha \not= 0$. \ For the sake of simplicity we simply denote, when this does not cause misunderstanding, $H_\alpha$ by $H$ when $\alpha \in \R$ and $H_{\infty }$ when $\alpha = + \infty$. \ Similarly, we omit the dependence on $\alpha$ in the other terms, e.g. the resolvent, the kernel of the resolvent operator, and so on, when this fact does not cause misunderstanding.

\subsection {Resolvent and spectrum}\label {Ss21}

Let $\alpha \in \R \setminus \{ 0 \}$, and let 
\bee
\psi (0)=0 \label {Formula4}
\eee
be the Dirichlet boundary condition at $x=0$ and  
\bee
\psi (a-)=\psi (a+) \ \mbox { and } \ \psi' (a+) - \psi' (a-) = \alpha \psi (a) \, , \label {Formula5}
\eee
 be the matching condition at $x=a$ where the Dirac's delta is supported. \ It is well known \cite {Al} that the linear operator $H$ admits a self-adjoint extension (still denoted by $H$) defined on the domain
\be
{D}(H) = \left \{ \psi \in H^{2,1} (\R^+ ) \cap H^{2,2} (\R^+ \setminus \{ a\} ) \ : \  \mbox { (\ref {Formula4}) and (\ref {Formula5}) hold true} \right \} \, . 
\ee

Let 
\be
K_0 (x,k)= \frac {i}{2k} e^{ik|x|}\, , \ \Im k >0 \,
\ee
and let
\be
\Gamma (k) =
\left ( 
\begin {array}{cc}
-  \frac {i}{2k} & -  \frac {i}{2k} e^{ika} \\ 
 -   \frac {i}{2k} e^{ika} &  -   \frac {1}{\alpha}- \frac {i}{2k} 
\end {array}
\right )
\ee
with inverse matrix
\be
\Gamma ^{-1} (k) =\frac {2k}{2ik-\alpha+\alpha e^{2ika}} 
\left (
\begin {array}{cc}
-2k - i\alpha  & i\alpha e^{i k a} \\
 i\alpha e^{i k a} & -i \alpha 
\end {array}
\right )\, . 
\ee
Then the resolvent operator is the integral operator \cite {Al}
\be
\left ( \left [ H -k^2 \right ]^{-1} \phi \right )(x) = \int_{\R^+} K (x,y,k) \phi (y) dy\, , \ \phi \in L^2 (\R^+ )\, , 
\ee
where 
\bee
K (x,y,k) = K_0(x-y,k)  - \frac {1}{4k^2} \sum_{j=1}^4 K_j (x,y,k) \label {Formula6}
\eee
is the kernel with
\be
K_1 (x,y,k) &=& \left [ [\Gamma^{-1} (k) ] \right ]_{1,1} e^{ik (|x|+|y|)} \\
K_2 (x,y,k) &=& \left [ [\Gamma^{-1} (k) ] \right ]_{1,2} e^{ik (|x|+|y-a|)} \\
K_3 (x,y,k) &=& \left [ [\Gamma^{-1} (k) ] \right ]_{2,1} e^{ik (|x-a|+|y|)} \\ 
K_4 (x,y,k) &=& \left [ [ \Gamma^{-1} (k)] \right ]_{2,2} e^{ik (|x-a|+|y-a|)} \, . 
\ee

Concerning the spectrum it follows that 
\bee 
\sigma_{ess} (H) = \sigma_{ac} (H) = [0,+\infty )\, , \label {Formula7}  
\eee
and the eigenvalues, if there, are given by $\en =k^2 <0$ where $k$ is a purely imaginary solution to the equation:
\bee
2ik-\alpha+\alpha e^{2ika} =0 \, , \ \Re k =0 \mbox { and } \  \Im k >0 \, , \label {Formula8}
\eee
obtained from the formula (2.1.13) by \cite {Al} for $\alpha_1 =+\infty$, $\alpha_2 =\alpha$, $y_1=0$ and $y_2 =a$ (according with the notation by \cite {Al}). \ This equation has complex-valued solutions
\bee
k= \frac {i}{2a} \left [ -a \alpha +  W_n \left ( a\alpha  e^{a\alpha } \right )\right ]\label {Formula9}
\eee
where $W_n (z)$ denotes the $n$-th branch of the Lambert special function \cite {C}. \ If we recall that:

\begin {itemize}

\item [i.] $W_0 (z)$ is real-valued if and only if $z \ge -e^{-1}$; in particular

\begin {itemize}

\item [ia.] $W_0(0)=0$,

\item [ib.] $W_0(-e^{-1})=-1$, 

\item [ic.] $
W_0 (z) \in [-1,0] \ \Leftrightarrow \ z \in [-e^{-1} ,0 ]$,

\item [id.] $W_0 (z) >0  \ \Leftrightarrow \ z>0$,

\item [ie.] the branch cut for $W_0 (z)$ is the line $(-\infty , - e^{-1} ]$;

\end {itemize}

\item [ii.] $W_{-1} (z)$ is real-valued if and only if $-e^{-1} \le z < 0$, and it takes values in the interval $[-1,0)$;
 
\item [iii.] $W_n (z)$ has not zero imaginary part for any $z \in \R$ and any $n \in \Z \setminus \{ 0,-1\} $, furthermore the branch cut for $W_{n} (z)$, $n \not= 0$, is the line $(-\infty , 0 ]$;

\end {itemize}
then solutions (\ref {Formula9}) are purely imaginary and such that $\Im k >0$ only if $n=0$ and $a\alpha < -1$. \ In conclusion, we have proved that

\begin {proposition} \label {Prop1}
If $a\alpha < -1$ then the discrete spectrum of $H$ is not empty and it consists of just one negative real-valued eigenvalue
\bee
\en =- \left [ \frac {1}{2a} \left [ -a\alpha + W_0 \left ( a\alpha  e^{a \alpha } \right )\right ] \right ]^2 \, . \label {Formula10}
\eee
If $a\alpha \ge -1$ then the discrete spectrum of $H$ is empty.
\end {proposition}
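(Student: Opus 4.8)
The plan is to identify the discrete spectrum with the purely imaginary poles of the resolvent kernel (\ref{Formula6}), i.e.\ with the admissible solutions of (\ref{Formula8}), and then to read off from the Lambert properties i--iii exactly which parameter range produces one such pole. Concretely, a negative real number $\en=k^2$ with $\Im k>0$ at which the denominator $2ik-\alpha+\alpha e^{2ika}$ of $\Gamma^{-1}(k)$ vanishes is an isolated pole of $(H-k^2)^{-1}$ lying below $\sigma_{ess}(H)=[0,+\infty)$ (see (\ref{Formula7})), hence a genuine eigenvalue; so it suffices to count the $k$ of (\ref{Formula9}) that are purely imaginary with positive imaginary part.

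First I would check that (\ref{Formula9}) really parametrizes all solutions of (\ref{Formula8}). Setting $W:=W_n(a\alpha e^{a\alpha})$, formula (\ref{Formula9}) gives $2ika=a\alpha-W$, whence $e^{2ika}=e^{a\alpha}e^{-W}$ and $2ik-\alpha=-W/a$; substituting into (\ref{Formula8}) collapses it to $We^{W}=a\alpha e^{a\alpha}$, which is exactly the defining relation of the Lambert branches. Since $a>0$, the two conditions ``$k$ purely imaginary'' and ``$\Im k>0$'' translate into \textbf{(A)} $W_n(a\alpha e^{a\alpha})\in\R$ and \textbf{(B)} $-a\alpha+W_n(a\alpha e^{a\alpha})>0$.

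Next I would put $s:=a\alpha$ and $z:=se^{s}$, recording that $z\ge-e^{-1}$ with equality iff $s=-1$, and $z<0$ iff $s<0$. Property iii rules out every branch with $n\notin\{0,-1\}$, since there $W_n(z)$ is non-real and \textbf{(A)} fails. For $n=0$, the properties listed under i make $W_0(z)$ real for all $s$; if $s\ge-1$ then $W_0(z)=s$, so $-s+W_0(z)=0$ and \textbf{(B)} fails, while if $s<-1$ then $z\in(-e^{-1},0)$ gives $W_0(z)\in(-1,0)$ and, since $-s>1$, one gets $-s+W_0(z)>0$, so \textbf{(B)} holds. For $n=-1$, property ii forces $s<0$ for reality, and there the root chosen by this branch yields $-s+W_{-1}(z)\le0$ throughout, so \textbf{(B)} fails.

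I expect the one genuinely delicate point to be this last piece of branch bookkeeping: for $s<-1$ the equation $We^{W}=se^{s}$ has the two real roots $s$ and a companion in $(-1,0)$, and one must verify that $W_0$ picks the companion whereas $W_{-1}$ picks $s$ (the boundary $s=-1$ giving the double root $-1$); pinning down these assignments, and hence the sign of $-s+W_n(z)$ in each subcase, is precisely what separates the empty from the nonempty spectrum. Once it is settled, only $n=0$ together with $a\alpha<-1$ survives, giving a single admissible $k$ and therefore, through $\en=k^2$, the unique negative eigenvalue (\ref{Formula10}); when $a\alpha\ge-1$ no branch is admissible and the discrete spectrum is empty, as claimed.
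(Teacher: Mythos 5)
Your proposal is correct and follows essentially the same route as the paper: both identify the eigenvalues with the purely imaginary solutions of (\ref{Formula8}) having $\Im k>0$, parametrize all solutions via the Lambert branches as in (\ref{Formula9}), and then use the branch properties i--iii to conclude that only $n=0$ with $a\alpha<-1$ is admissible. Your write-up actually makes explicit two steps the paper leaves implicit --- the verification that (\ref{Formula9}) solves (\ref{Formula8}) and the bookkeeping of which real root ($s$ versus its companion in $(-1,0)$) each of $W_0$, $W_{-1}$ selects --- so it is, if anything, a more complete version of the same argument.
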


\begin {remark} \label {Nota1}
If $\alpha =+\infty$ then condition (\ref {Formula5}) becomes the Dirichlet boundary condition $\psi (a)=0$ and in such a case the spectrum of $H_\infty$ is purely discrete with eigenvalues
\be
\en_{\infty ,m} = \left ( \frac {m\pi}{a} \right )^2 \, , \ m=1,2 ,\ldots \,  
\ee
and associated normalized eigenvectors 
\bee
\psi_{\infty ,m} (x) = \sqrt {\frac {2}{a}} \sin \left ( \frac {m\pi x}{a} \right ) \chi_{(0,a)}(x) \, , \label {Formula11}
\eee
where
\be
\chi_A (x) = 
\left \{
\begin {array}{ll}
0 & \mbox { if } x \notin A \\ 
1 & \mbox { if } x \in A 
\end {array}
\right. \, .
\ee
\end {remark}

\subsection {Barrier quantum resonances} \label {Ss22}

In the case of a repulsive $\delta$ interaction at $x=a$, i.e. for $\alpha >0$, the discrete spectrum of $H$ is empty. \ However, quantum resonances may occur. 

Quantum resonances for linear Schr\"odinger operators may be defined in several ways (see \cite {E,Mo,RFC,Simon1,Simon2,Z} for a review); here, we identify quantum resonances with complex poles of the kernel of the analytic continuation of the resolvent operator. \ More precisely, let ${\mathcal D}$ be a dense subset of $L^2 (\R^+)$, then for any $\varphi \in {\mathcal D}$ the function $\en \in \C \to \langle \varphi, [H-\en ]^{-1} \varphi \rangle$ has a meromorphic continuation from the upper half-plane $\Im \en >0$ to the lower half-plane $\Im \en <0$; resonances are the complex poles of such an analytic extension.  

By means of such a definition and by making use of the resolvent kernel formula (\ref {Formula6}) then quantum resonances $\en=k^2$ of $H$ are associated to the complex solutions $k$ to (\ref {Formula8}) such that $\mbox {arg} (k) \in (-\pi /4 , 0)$. \ Hence, it follows that

\begin {proposition} \label {Prop2}
If $\alpha >0$ the discrete spectrum of $H$ is empty and $H$ admits a family of quantum resonances $\en_m = k_m^2$ where 
\bee
k_m =  i w_m \, , \ w_m = \frac {1}{2a} \left [  - a\alpha + W_{-m} \left ( {a\alpha} e^{ {a\alpha}} \right ) \right ]  \, ,  \ m=1,2 ,\ldots \, . \label {Formula12}
\eee
\end {proposition}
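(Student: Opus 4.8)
The plan is to reduce the claim about the poles of the meromorphic continuation of $\langle \varphi, [H-\en]^{-1}\varphi\rangle$ to a purely algebraic analysis of the transcendental equation (\ref{Formula8}), exactly as in the eigenvalue computation of Proposition \ref{Prop1}, and then to extract precisely those roots lying in the resonance sector $\mbox{arg}(k)\in(-\pi/4,0)$. First I would observe, using the explicit resolvent kernel (\ref{Formula6}), that for $\varphi\in\mathcal{D}$ the matrix element $\langle\varphi,[H-\en]^{-1}\varphi\rangle$ is a rational expression in $k$ (with $\en=k^2$) times exponentials $e^{ik(\cdots)}$, all multiplied by the common factor $\Gamma^{-1}(k)$. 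Since the entries of $\Gamma^{-1}(k)$ share the single denominator $2ik-\alpha+\alpha e^{2ika}$, the only possible poles of the continuation come from the zeros of this denominator; the exponential numerators are entire in $k$ and hence contribute nothing. Thus the resonances are exactly the solutions of (\ref{Formula8}), now considered on the second (lower) sheet $\Im\en<0$.

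Next I would solve (\ref{Formula8}) in closed form. Rewriting $2ik-\alpha+\alpha e^{2ika}=0$ and setting $k=iw$, the same manipulation that produced (\ref{Formula9}) gives the full family
\be
k=\frac{i}{2a}\left[-a\alpha+W_n\left(a\alpha\, e^{a\alpha}\right)\right],\quad n\in\Z,
\ee
where each branch $W_n$ of the Lambert function yields one candidate root. Writing $k=iw_m$ with $w_m=\frac{1}{2a}[-a\alpha+W_{-m}(a\alpha\, e^{a\alpha})]$ then identifies the branches that must be retained.

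The decisive step is the sector selection: I must show that for $\alpha>0$ precisely the branches $n=-m$, $m=1,2,\ldots$, produce roots with $\mbox{arg}(k)\in(-\pi/4,0)$, while $n=0$ and $n=-1$ are excluded. Here I would invoke the properties (i)--(iii) of the Lambert function catalogued before Proposition \ref{Prop1}. For $\alpha>0$ the argument $a\alpha\, e^{a\alpha}>0$, so by (id) the principal branch $W_0$ is real and positive, giving a spurious real $k$ (no resonance); by (ii) the branch $W_{-1}$ is not even real for positive argument, and one checks it does not land in the sector either. For the remaining branches $n=-m$ with $m\ge1$, property (iii) guarantees a nonzero imaginary part, and I would verify — by tracking the known large-$|n|$ asymptotics $W_n(z)\sim \ln z+2\pi i n$ and the monotone placement of $\Im W_{-m}$ — that the resulting $k_m=iw_m$ fall in the fourth-quadrant wedge $(-\pi/4,0)$ rather than in any other sector. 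This sector bookkeeping, i.e. confirming that the analytic continuation reaches exactly these branches and no others, is the part I expect to require the most care, since it hinges on matching the branch structure of $W_n$ (with its cut along $(-\infty,0]$ for $n\neq0,-1$) to the geometry of the second Riemann sheet of the resolvent in the $k$-plane. Once the admissible branches are pinned down, the formula (\ref{Formula12}) follows immediately, and the emptiness of the discrete spectrum for $\alpha>0$ is just the $\alpha>0$ case of Proposition \ref{Prop1}, completing the proof.
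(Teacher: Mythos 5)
Your overall route is the paper's: resonances are poles of the meromorphically continued matrix elements of the resolvent, the only candidate poles are the zeros of the common denominator $2ik-\alpha+\alpha e^{2ika}$ of $\Gamma^{-1}(k)$ in the kernel (\ref{Formula6}), those zeros are expressed through the Lambert function as in (\ref{Formula9}), and one then selects branches; the appeal to Proposition \ref{Prop1} for the emptiness of the discrete spectrum is also fine. The trouble is in the step you yourself call decisive, and it is a genuine error, not a matter of presentation: you exclude the branch $n=-1$ on the grounds that ``$W_{-1}$ is not even real for positive argument, and one checks it does not land in the sector either.'' This is exactly backwards. For $\alpha>0$ the branch $W_{-1}$ is the one that produces the first resonance, i.e. $m=1$ in (\ref{Formula12}): the non-reality of $W_{-1}(a\alpha e^{a\alpha})$ is not a reason for exclusion but the very mechanism that makes $k_1=iw_1$ a genuine pole with $\Re k_1>0$ and $\Im k_1<0$, rather than a bound state; numerically, for $a=1$, $\alpha=1$, this branch gives $k_1=2.2986-0.7660i$, the first entry of Table \ref{Tavola2}, and it is the narrowest, physically dominant resonance. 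Your write-up is also internally inconsistent — you first assert that ``precisely the branches $n=-m$, $m=1,2,\ldots$'' are retained (which includes $n=-1$) and then exclude $n=-1$ — and it misapplies the listed properties: property (iii) is stated only for $n\in\Z\setminus \{0,-1\}$, so it cannot cover ``the remaining branches $n=-m$ with $m\ge 1$''; for $n=-1$ the relevant fact is property (ii). Finally, the exclusion of $n=0$ is correct but for a garbled reason: since $a\alpha>0>-1$ one has $W_0(a\alpha e^{a\alpha})=a\alpha$ identically, so this branch gives the trivial zero $k=0$; a real value of $W_0$ different from $a\alpha$ would give a purely imaginary $k$ (a bound state, which is what happens in Proposition \ref{Prop1}), not a ``real $k$''.

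A second, independent problem is your plan to verify that every $k_m$ lies in the wedge $\arg(k)\in(-\pi/4,0)$. That verification cannot be carried out because the claim is false for small $\alpha$: as $\alpha\to 0^{+}$ one has $W_{-1}(a\alpha e^{a\alpha})\approx \ln(a\alpha)-i\pi$, hence $\Re k_1\to \pi/(2a)$ while $\Im k_1\to-\infty$, so $k_1$ sits in the fourth quadrant of the $k$-plane but outside the wedge. What the negative branches actually give — and what the proof needs — is the complete list of zeros of (\ref{Formula8}) with $\Re k>0$ and $\Im k<0$; the positive branches $n=+m$ contribute only the mirror zeros $-\bar k_m$ with $\Re k<0$, which is why they are discarded. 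Whether a given $k_m$ additionally satisfies $|\Im k_m|<|\Re k_m|$ is an $\alpha$-dependent refinement that the paper does not address in Proposition \ref{Prop2} at all: it is handled later, in Theorem \ref{Teo2}, through the coefficients $\beta_m$ that switch off the residue contribution of exactly those zeros lying outside the wedge.
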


In Table \ref {Tavola1} and in Figure \ref {Fig_2} we collect the values of the first 10 resonances $\omega_m$, $m=1,\ldots , 10$, for different values of $\alpha$.

\begin{table}
\begin{center}
\begin{tabular}{|l||c|c||c|c||c|c||c|} 
\hline
\multicolumn{1}{|l||}{ } & \multicolumn{2}{|c||}{$\alpha =1$} & \multicolumn{2}{|c||}{$\alpha =5$} & \multicolumn{2}{|c||}{$\alpha =10$} & \multicolumn{1}{|c|}{$\alpha =+\infty $}
\\ \hline
$m $          &  $\Re \en_m $   & $\Im \en_m $   &  $\Re \en_m $   & $\Im \en_m $  &  $\Re \en_m $   & $\Im \en_m$ & $\en_{\infty ,m}$ \\ \hline \hline 
$1$      &  $ 4.70 $ & $ -3.52 $ & $ 7.31$  & $-0.96$ & $ 8.28 $  & $ -0.38$  & $ 9.87 $ \\ \hline
$2$      &  $ 28.10 $ & $ -13.01 $ & $31.98$ & $-5.00$ & $ 34.08 $  & $ -2.41 $  & $ 39.48 $  \\ \hline
$3$      &  $ 71.69 $ & $ -24.46 $ & $76.06$ & $ -11.14$ & $ 78.75 $  & $ -6.18 $  & $ 88.83 $  \\ \hline
$4$      &  $ 135.22 $ & $ -37.08 $ & $139.88$ & $-18.57$ & $ 142.87 $  & $ -11.24 $  & $ 157.91 $  \\ \hline
$5$      &  $ 218.60 $ & $ -50.55 $ & $223.47$ & $-26.89$ & $ 226.64 $  & $ -17.24 $  & $ 246.74 $  \\ \hline
$6$      &  $ 321.81 $ & $ -64.68 $ & $326.83$ & $-35.91$ & $ 330.13 $  & $ -23.98 $  & $ 355.31 $  \\ \hline
$7$      &  $ 444.81 $ & $ -79.36 $ & $449.97$ & $-45.49$ &  $ 453.35 $  & $ -31.30 $  & $ 483.61 $  \\ \hline
$8$      &  $ 587.58 $ & $ -94.50 $ & $592.86$ & $-55.55$ &  $ 596.30 $  & $ -39.12 $  & $ 631.65 $  \\ \hline
$9$      &  $ 750.13 $ & $ -110.05 $ & $755.50$ & $-66.02$ & $ 759.01 $  & $ -47.36 $ & $ 799.44 $  \\ \hline
$10$     &  $ 932.44 $ & $ -125.96 $ & $937.90$ & $-76.85$ & $ 941.46 $  & $ -55.98 $  & $ 986.96 $  \\ \hline
\end{tabular}
\caption{Table of values of quantum resonances $\en_m$ of $H$ with repulsive singular potential for different values of the 
strength $\alpha$; for argument's sake sake we fix  the units such that $a=1$. \ In the last column we collect the values $\en_{\infty ,m} =(\pi m/a)^2$ corresponding to the real-valued eigenvalues obtained in the case of two infinite barriers at $x=0$ and $x=a$.}
\label{Tavola1}
\end{center}
\end {table}
                               
\begin{figure}
\begin{center}
\includegraphics[height=7cm,width=9cm]{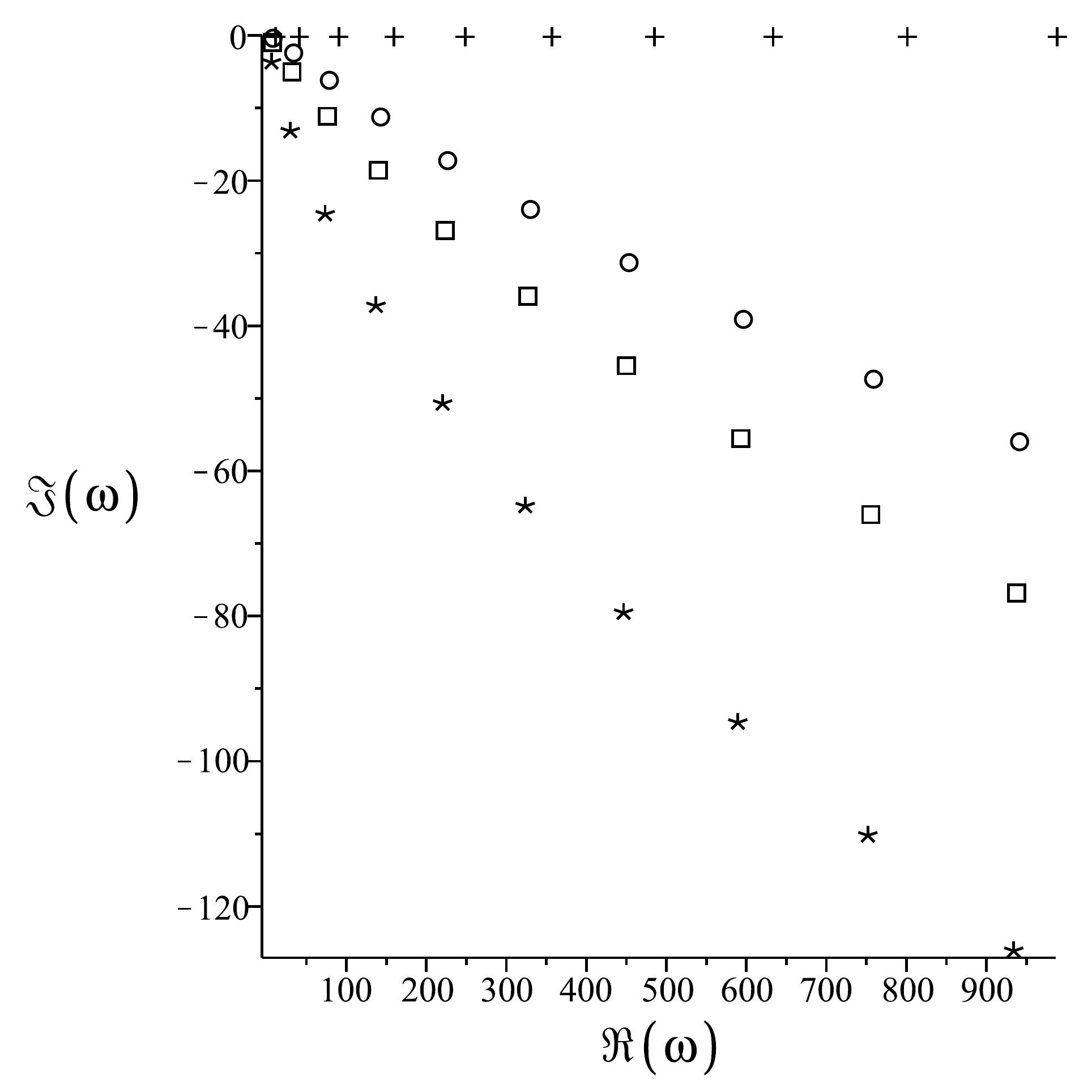}
\caption{Plot of values of quantum resonances $\en_m$ collected in Table \ref {Tavola1} for different values of the 
strength $\alpha$ (asterisk symbols correspond to $\alpha =1$, box symbols correspond to $\alpha =5$, circle symbols correspond to $\alpha = 10$ and finally cross symbols correspond to $\alpha =+\infty$). \ For argument's sake we fix $a=1$.}
\label {Fig_2}
\end{center}
\end{figure}

\begin {remark} \label {Nota2} As $\alpha >0$ increases, the imaginary part of the quantum resonances becomes smaller and smaller and we speak of \emph {narrow resonances}. \ In fact, one can check that the quantum resonances $\en_m $ go to $\en_{\infty ,m}$ for increasing values of $\alpha$ and $a$ fixed:
\be
\lim_{\alpha \to + \infty } \en_m =\en_{\infty ,m} \ \mbox { for any fixed } a>0\, .
\ee
Indeed, we remind that (see formula (4.20) by \cite {C}) 
\bee
W_m (z) \sim \ln z + 2\pi i m  - \ln \left ( 2\pi i m + \ln z \right )\ \mbox { for large } z\, , \label {Formula13}
\eee
and then 
\be
k_m =  \frac {1}{2ia} \left [  a\alpha - W_{-m} \left ( a\alpha e^{ a\alpha} \right ) \right ] 
 \sim  \frac {m\pi }{a} \ \mbox { as } \ \alpha \to + \infty \, . 
\ee
\end {remark}

\begin {remark} \label {Nota3}
Quantum resonances can be also defined as the complex values $\en=k^2$ such that the associated solution to the equation $H\psi =\en \psi$ satisfies the outgoing condition $\psi (x) \sim e^{ikx}$ when $x$ goes to plus infinity (i.e.: Siegert's approximation method \cite {DM,R}). \ In such a case we have to solve the differential equation 
\bee
- \psi '' =k^2 \psi \, , \ x \in (0,a) \cup (a,+\infty ) \label {Formula12Bis}
\eee
with conditions (\ref {Formula4}) and (\ref {Formula5}) and the outgoing condition 
\be
\psi (x) = e^{ikx} \, , \ \forall x >a.
\ee
A straightforward calculation proves that these  conditions are fulfilled provided that $k$ is a solution to (\ref {Formula8}). 
\end {remark}

\begin {remark} \label {Nota4}
Another way to define quantum resonances consists to find the values $\en =k^2$ such that the scattering coefficient becomes singular. \ That is, let
\be
\psi (x) = 
\left \{
\begin {array}{ll}
C_1 \sin (kx+\varphi_1 ) & \mbox { if } x \in (0,a) \\ 
C_2 \sin (kx+\varphi_2 ) & \mbox { if } x > a 
\end {array}
\right. 
\ee
be the solution to (\ref {Formula12Bis}). \ Dirichlet condition $\psi (0)=0$ implies that $\varphi_1 =0$; while the matching conditions (\ref {Formula5}) at $x=a$ imply that
\be
\left \{ 
\begin {array}{lcl}
C_2 \sin (ka+\varphi_2 ) &=& C_1 \sin (ka) \\ 
k C_2 \cos (ka+\varphi_2 ) &=& k C_1 \cos (ka) +\alpha C_1 \sin (ka) 
\end {array}
\right. 
\ee
from which it follows that
\be
k^2 C_2^2 = k^2 C_1^2 + \alpha^2 C_1^2 \sin^2(ka) + k\alpha C_1^2 \sin (2ka) \, . 
\ee
If we define in the Winter's model the scattering coefficient as
\be
S(\en ) = \frac {C_1^2}{C_2^2} = \frac {k^2}{k^2  + \alpha^2 \sin^2(ka) + k\alpha \sin (2ka)}\, , \ \en =k^2 \, , 
\ee
then it has complex poles $k_m$ given by (\ref {Formula12}) and the function $S(\omega )$, for $\omega \in [0,+\infty )$, has a sequence of maximum values (see Figure \ref {Fig_3})  in a neighborhood of $\Re k_m^2$.
\end {remark}

\begin{figure}
\begin{center}
\includegraphics[height=6cm,width=8cm]{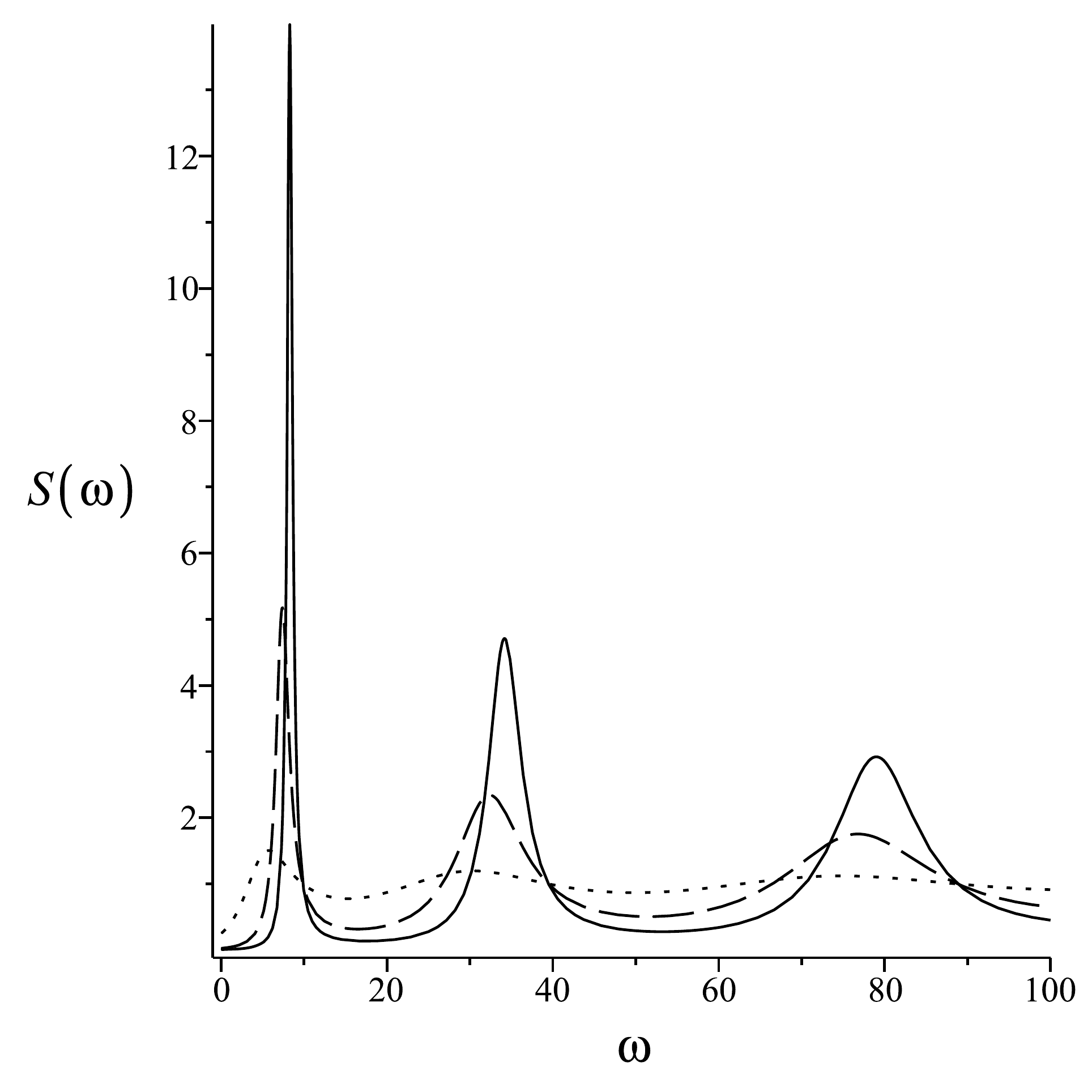}
\caption{Plot of the scattering coefficient $S (\en )$  for $\alpha =+1$ (dot line), $\alpha =+5$ (broken line) and $\alpha =+10$ (full line).}
\label {Fig_3}
\end{center}
\end{figure}

\subsection {Evolution operator} \label {Ss23} Solution $\psi_t (x) \in L^2 (\R^+ ) $ to the time-dependent linear Schr\"odinger equation (\ref {Formula2}) is given by $\psi_t =e^{-itH} \psi_0$, where $e^{-itH}$ is the evolution operator associated to the self-adjoint operator $H$. \ Expression of the evolution operator can be recovered from the resolvent operator by making use of arguments similar to the ones used by \cite {KS}. \ Indeed, the evolution operator is an integral operator 
\bee
\left [ e^{-itH} \psi_0 \right ] (x) = \int_{\R^+} U(x,y,t) \psi_0 (y) dy \label {Formula15}
\eee
where the kernel $U(x,y,t)$ has the form
\be
U (x,y,t) = - \frac {i}{\pi} \int_{\R +i0} k e^{-ik^2 t} K (x,y,k ) dk = U_0 (x,y,t)  + \sum_{j=1}^4 U_j (x,y,t) 
\ee
where
\be
U_0 (x,y,t) = - \frac {i}{\pi} \int_{\R +i0} k e^{-ik^2 t} K_0 (x-y,k ) dk = \frac {1}{\sqrt {4\pi i t}}e^{i|x-y|^2 /4t}
\ee
and
\bee
U_j (x,y,t) 
= \frac {i}{4\pi} \int_{\R +i0} \frac {1}{k} e^{-ik^2 t} K_j (x,y,k ) dk\, . 
\label {Formula16}
\eee

In order to apply formula (\ref {Formula15}) in numerical experiments we have to numerically compute the above integrals (\ref {Formula16}). \ Here, we propose a faster way to compute the kernel $U(x,y,t)$ by means of a convergent series. \ The following result, which proof is postponed in the Appendix, holds true

\begin {theorem} \label {Teo1} Let
\be
\begin {array}{lcl}
e_n^1 &:=& e_n^1 (x,y,t) = (2an+|x|+|y|)/{2 \sqrt {t}}  \\
e_n^2 &:=& e_n^2 (x,y,t) =  (2an+|x|+|y-a|-a)/{2 \sqrt {t}}  \\
e_n^3 &:=& e_n^3 (x,y,t) =  (2an+|x-a|+|y|-a)/{2 \sqrt {t}}\\
e_n^4 &:=& e_n^4 (x,y,t) =  (2an+|x-a|+|y-a|-2a)/{2 \sqrt {t}} 
\end {array}
\, , \  n=0,1,2, \ldots \, , 
\ee
and 
\be
f_n^j = e_n^j + i\alpha \sqrt {t} /2 \ \mbox { and } \ g_n^j = e^{i(e_n^j)^2} \, , \ j=1,2,3,4 \, . 
\ee
Let
\be
U_0 := U_0 (x,y,t)  = \frac {1}{\sqrt {4\pi i t}}e^{i|x-y|^2 /4t}\, , 
\ee
\be 
V_0 := V_0 (x,y,t) =  -\frac {1}{\sqrt {8\pi}} 
\left ( {it}/{2} \right )^{-1/2} e^{-i(f_0^1)^2/2} g_0^1 D_{0} ((1-i)f_0^1) 
\ee
and let
\be
v_n &:=& v_n (x,y,t) = - e^{-i(f_n^1)^2/2} g_n^1 D_{-n} ((1-i)f_n^1) + e^{-i(f_{n}^2)^2/2}  g_{n}^2 D_{-n} ((1-i)f_{n}^2) + \\
&& \ \ +  e^{-i(f_{n}^3)^2/2}  g_{n}^3 D_{-n} ((1-i)f_{n}^3) - e^{-i(f_{n}^4)^2/2}  g_{n}^4 D_{-n} ((1-i)f_{n}^4) \, , 
\ee
where $D_n (z)$ denotes the parabolic cylinder function. \ Then
\bee
U (x,y,t) = U_0 + V_0 + \frac {1}{\sqrt {8\pi} } \sum_{n=1}^\infty  \alpha^{n} (it/2)^{(n-1)/2} v_n\, .\label {Formula17}
\eee
\end {theorem}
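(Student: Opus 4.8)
The plan is to start from the decomposition $U(x,y,t) = U_0 + \sum_{j=1}^4 U_j$ with $U_j = \frac{i}{4\pi}\int_{\R+i0}\frac1k e^{-ik^2t}K_j(x,y,k)\,dk$, insert the explicit entries of $\Gamma^{-1}(k)$, and expand the common denominator $2ik-\alpha+\alpha e^{2ika}$ as a Neumann (geometric) series. The crucial algebraic observation is that the denominator factors as $(2ik-\alpha)\left(1 + \frac{\alpha e^{2ika}}{2ik-\alpha}\right)$, so that
\[\frac{1}{2ik-\alpha+\alpha e^{2ika}} = \sum_{n=0}^\infty \frac{(-\alpha)^n e^{2ikan}}{(2ik-\alpha)^{n+1}}.\]
On the contour $\R+i0$ one has $|e^{2ika}|<1$ and $|2ik-\alpha|\ge|\alpha|$, so each geometric ratio has modulus strictly below one off $k=0$; the $+i0$ prescription is exactly what secures convergence and, after a dominated-convergence estimate, the term-by-term integration. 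The shift $e^{2ikan}$ is what produces the translation $2an$ inside every $e_n^j$.

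Second, I would use the identity $-2k-i\alpha = i(2ik-\alpha)$, which is special to the $(1,1)$ entry of $\Gamma^{-1}$: for $j=1$ it cancels one power of $(2ik-\alpha)$, lowering the denominator exponent from $n+1$ to $n$, whereas for $j=2,3,4$ (whose matrix entries are $\pm i\alpha$ times phases) the exponent stays $n+1$ but carries an extra factor $\alpha$. Reindexing the latter three sums by $n\mapsto n-1$ aligns all four contributions at a common order: each becomes $\mp\frac{1}{2\pi}(-\alpha)^m\int_{\R+i0}\frac{e^{-ik^2t}e^{2i\sqrt t\,k\,e_m^j}}{(2ik-\alpha)^m}\,dk$, with the $-a$, $-2a$ offsets in $e_m^2,e_m^3,e_m^4$ coming precisely from the $+a$ phase of the off-diagonal entries combined with the shift of the index, and the sign pattern $(-,+,+,-)$ coming from the factors $-2k-i\alpha$ versus $i\alpha$ versus $-i\alpha$ together with the reindexing. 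The lone $n=0$, $j=1$ term survives with denominator $1$ and yields, after a plain Gaussian integration, exactly $V_0$ (using $D_0(z)=e^{-z^2/4}$).

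Third, the core computation is the closed-form evaluation of the model integral $J_m=\int_{\R+i0}\frac{e^{-ik^2t}e^{ikc}}{(2ik-\alpha)^m}\,dk$, $c=2\sqrt t\,e_m^j$. I would represent $\frac{1}{(\alpha-2ik)^m}=\frac{1}{(m-1)!}\int_0^\infty s^{m-1}e^{-s(\alpha-2ik)}\,ds$ (legitimate because $\Re(\alpha-2ik)=\alpha+2\Im k>0$ on the contour), carry out the Gaussian $k$-integration to get $\sqrt{\pi/(it)}\,e^{i(c+2s)^2/4t}$, and then recognize the remaining $s$-integral as the standard parabolic-cylinder representation $\int_0^\infty s^{\nu-1}e^{-ps^2-qs}\,ds=(2p)^{-\nu/2}\Gamma(\nu)e^{q^2/8p}D_{-\nu}(q/\sqrt{2p})$ with $\nu=m$, $p=-i/t$, $q=\alpha-ic/t$. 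The branch evaluation $\sqrt{-2i}=1-i$ turns $q/\sqrt{2p}$ into $(1-i)f_m^j$ with $f_m^j=e_m^j+i\alpha\sqrt t/2$, while $e^{ic^2/4t}=g_m^j$ and $e^{q^2/8p}=e^{-i(f_m^j)^2/2}$; the prefactors collapse via $(it/2)^{-1}=-2i/t$ into $\frac{1}{\sqrt{8\pi}}\alpha^m(it/2)^{(m-1)/2}$, which is exactly (\ref{Formula17}).

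The main obstacle is not any single computation but the combination of (i) justifying the interchange of summation and integration for the Neumann series on $\R+i0$, where convergence is only marginal at $k=0$ and is rescued solely by the $i0$ regularization, together with the validity of the Laplace representation of $(\alpha-2ik)^{-m}$ on the same contour; and (ii) the exact bookkeeping of constants, branch cuts (the choice $\sqrt{-2i}=1-i$), the $\alpha$-dependent shift $i\alpha\sqrt t/2$ inside $f_n^j$, and the sign flips introduced by reindexing, all of which must conspire to reproduce (\ref{Formula17}) term by term. I would organize the write-up so that the single scalar integral $J_m$ is evaluated once and for all, and the four cases $j=1,\dots,4$ are then obtained merely by substituting the corresponding phase $c=2\sqrt t\,e_m^j$ and sign, which keeps the verification of the delicate constants confined to one lemma.
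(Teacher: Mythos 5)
Your proposal is correct, and its skeleton coincides with the paper's proof: the same decomposition $U=U_0+\sum_{j=1}^4U_j$, the same geometric expansion $\frac{1}{2ik-\alpha+\alpha e^{2ika}}=\sum_{n\ge0}(-\alpha)^n e^{2ikan}(2ik-\alpha)^{-n-1}$, the same cancellation $-2k-i\alpha=i(2ik-\alpha)$ that lowers the denominator power for $j=1$, and the same reindexing of the $j=2,3,4$ series. The genuine difference lies in the key lemma used for the model integral. The paper completes the square in $k$ and then quotes, from Appendix A of \cite{KS}, the identity $\int_\R e^{-ix^2}(x+z)^{-n}dx=-i(-2i)^{(n-1)/2}\sqrt{2\pi}\,e^{-iz^2/2}D_{-n}((1-i)z)$ for $\Im z>0$; you instead re-derive the parabolic-cylinder outcome from scratch via the Schwinger representation $(\alpha-2ik)^{-m}=\frac{1}{(m-1)!}\int_0^\infty s^{m-1}e^{-s(\alpha-2ik)}\,ds$, a Fresnel--Gaussian $k$-integration, and the classical formula $\int_0^\infty s^{\nu-1}e^{-ps^2-qs}ds=(2p)^{-\nu/2}\Gamma(\nu)e^{q^2/(8p)}D_{-\nu}(q/\sqrt{2p})$. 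Your constants check out: with $c=2\sqrt t\,e_m^j$ one indeed gets $q/\sqrt{2p}=(1-i)f_m^j$, $e^{q^2/(8p)}=e^{-i(f_m^j)^2/2}$, $e^{ic^2/(4t)}=g_m^j$, $(2p)^{-m/2}=(it/2)^{m/2}$, and the prefactors collapse to $\frac{1}{\sqrt{8\pi}}\alpha^m(it/2)^{(m-1)/2}$ with the sign pattern $(-,+,+,-)$, landing exactly on (\ref{Formula17}). The two routes even share the same hidden hypothesis: your Laplace representation requires $\Re(\alpha-2ik)>0$ on the contour, i.e. $\alpha>0$, precisely as the paper's cited identity requires $\Im f_n^j=\alpha\sqrt t/2>0$; neither argument as written covers $\alpha<0$. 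What your variant buys is self-containedness (no appeal to \cite{KS}) and cleaner bookkeeping: your explicit shift $n\mapsto n-1$ produces directly the $f_m^j,g_m^j$ of (\ref{Formula17}), whereas the paper's intermediate displays (\ref{Formula35})--(\ref{Formula37}) show $f_n^j,g_n^j$ where the completed square actually yields $f_{n+1}^j,g_{n+1}^j$, an index slip that is silently absorbed in the final reindexing. The price is one routine justification you should make explicit: the classical formula you invoke is stated for $\Re p>0$, while you apply it at the boundary value $p=-i/t$, so a limiting argument (e.g. $t\to t-i0$) is needed there — the analogue of the contour care that the paper, too, leaves implicit.
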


\begin {remark} \label {Nota5}
If we recall the following properties of the parabolic cylinder function \cite {AS} 

\begin {itemize}

\item [i.] $D_0 (z)=e^{-z^2/4}$;

\item [ii.] $D_{-1} (z) = e^{x^2/4}\sqrt {\frac {\pi}{2}} \mbox {\rm erfc} \left ( \frac {z}{\sqrt {2}}\right )$;

\item [iii.] $D_{-m} (z) = \frac {z D_{1-m} (z) - D_{2-m} (z)}{1-m}$, $m=2,3,\ldots $;

\end {itemize}
then it follows that
\be
V_0 (x,y,t) &=&  -\frac {1}{\sqrt {4\pi i\pi}} 
 e^{i(|x|+|y|)^2/4t}
\ee
and that
\be
e^{-i(f_1^j)^2/2} g_1^j D_{-1} ((1-i)f_1^j) = \sqrt {\frac {\pi}{2}} e^{-i(f_1^j)^2} g_1^j \mbox {\rm erfc} \left ( \frac {(1-i)f_1^j}{\sqrt {2}}\right ) 
\ee
Hence, by induction, terms $v_n$  can be computed by means of the error function $\mbox {\rm erfc}$. \ In particular, by means of the asymptotic expansion (7.1.23) \cite {AS} one can check that
\be
D_{-m}(z) \sim z^{-m} e^{-z^2/4}\ \mbox { as } z \to \infty \, . 
\ee
In conclusion, it follows that
\be
v_n \sim - \frac {g_n^1}{[(1-i)f_n^1]^n}+ \frac {g_n^2}{[(1-i)f_n^2]^n}+ \frac {g_n^3}{[(1-i)f_n^3]^n}- \frac {g_n^4}{[(1-i)f_n^4]^n} \sim (na/\sqrt {t})^{-n}
\ee
for large $n$ and thus the series (\ref {Formula17}) rapidly converges for any $t$ and $\alpha$.
\end {remark}

\begin {remark} \label {Nota6}
By means of a straightforward calculation one can check that
\be
U(x,y,t)=0 \ \mbox { when } \ xy \le 0 \, .
\ee
Indeed, if, for instance, $x\le 0$ and $y \ge 0$ then 
\be
V_0 (x,y,t) =  -\frac {1}{\sqrt {4\pi i\pi}} 
 e^{i(|x|+|y|)^2/4t}  = -\frac {1}{\sqrt {4\pi i\pi}} 
 e^{i(-x+y)^2/4t} = - U_0 (x,y,t)
 \ee
 and for any $n=0,1,2, \ldots$
 \be
\begin {array}{lcl}
e_n^1 &=& (2an-x+y)/{2 \sqrt {t}} \\
e_n^2 &=& (2an-x+|y-a|-a)/{2 \sqrt {t}} \\
e_n^3 &=& (2an-x+y)/{2 \sqrt {t}}=e_n^1\\
e_n^4 &=& (2an-x+|y-a|-a)/{2 \sqrt {t}}=e_n^2 
\end {array}
\,  ,
\ee
from which it follows that $v_n=0$ for any $n$.
\end {remark}

\subsection {Survival amplitude} \label {Ss24}

Let $\psi_t (x)$ be the solution to (\ref {Formula2}) with initial condition $\psi_0 (x)$. \ We define \emph {survival amplitude} the scalar product between these two vectors, that is
\be
\Am (t) :=   \langle \psi_0 , \psi_t \rangle   \, . 
\ee
In order to discuss the exponential    behavior (\ref {Formula19}) in the Winter's model associated to the quantum resonances $k_m$ given by Proposition \ref {Prop2} we consider the following experiment: let us choose $\psi_0$ coinciding with the ground state wavefunction of $H_\infty$  
\be
 \psi_{\infty ,1} (x) = \sqrt {\frac {2}{a}} \sin \left ( \frac {\pi x}{a} \right )\chi_{[0,a]}(x)\, , 
\ee
where $\chi_{[0,a]}(x)$ is the step function in the interval $[0,a]$, with associated eigenvalue $\en_{\infty ,1} = \frac {\pi^2}{a^2}$. \ 
Then we compute the survival amplitude $
\Am (t) := \langle \psi_0 , \psi_t \rangle $ where $\psi_t = e^{-iH t} \psi_0$ and $\psi_0 = \psi_{\infty ,1}$, for different values of $\alpha$ (e.g. $\alpha =1$, $\alpha =10$ and $\alpha =100$). \ Numerical computation of $\psi_t$, and then of $\Am (t)$, could be done by making use of (\ref {Formula15}) where the kernel $U(x,y,t)$ is given by means of the integrals (\ref {Formula16}) or, more quickly and easily, by making use of the convergence series given in Theorem \ref {Teo1}. \ In fact, because of the particular choice  of the initial wavefunction $\psi_0$ we don't necessarily need to make use of these numerical tools but, in order to compute the survival amplitude, we could  make use of the following Theorem.

\begin {theorem} \label {Teo2} Let $k_m$ be the complex-valued solutions to (\ref {Formula8}) given in Proposition \ref {Prop2}; let 
\be
\beta_m = \left \{ 
\begin {array}{ll}
0 & \mbox { if } |\Im k_m | > |\Re k_m | \\
\frac 12  & \mbox { if } |\Im k_m | = |\Re k_m | \\
1 & \mbox { if } |\Im k_m | < |\Re k_m | 
\end {array}
\right. \, ,
\ee
and 
\be
a_1 &=&  - \frac 
{(1+i)\sqrt {2}a^3}
{4 (1+a\alpha )^2 \pi^{9/2} }
\left [ -8  (1+a\alpha )^2 +\pi^2 (a^2 \alpha^2+2a \alpha + 5 ) \right ] \\
c_m &=& 2\pi i q_m \, , \ q_m = \frac {a\pi k_m}{1+a(\alpha -2 i k_m)} 
\left [ \frac { 1+e^{ik_m a}}{\pi^2 - k_m^2 a^2} \right ]^2 
\ee
then
\be
 \Am (t) = \langle \psi_0 , \psi_t \rangle  = a_1 t^{-3/2} - \sum_{m=1}^\infty \beta_m c_m e^{-ik_m^2 t} + O (t^{-5/2}) \ \mbox { as } \ t\to + \infty \, . 
\ee
\end {theorem}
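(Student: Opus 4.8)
The starting point is the resolvent representation of the survival amplitude. Since $\psi_0=\psi_{\infty,1}$ is real and supported in $[0,a]$, inserting the evolution kernel (\ref{Formula15})--(\ref{Formula16}) with $K(x,y,k)$ as in (\ref{Formula6}) into $\Am(t)=\langle\psi_0,e^{-itH}\psi_0\rangle$ gives
\[
\Am(t) = -\frac{i}{\pi}\int_{\R+i0} k\, e^{-ik^2 t}\, F(k)\, dk, \qquad F(k):=\langle\psi_0,[H-k^2]^{-1}\psi_0\rangle .
\]
First I would compute $F(k)$ in closed form. On $[0,a]$ one has $|x|=x$ and $|x-a|=a-x$, so each $K_j$ contributes a product of two elementary integrals of the type $\int_0^a\sin(\pi x/a)e^{ikx}\,dx=\pi a(1+e^{ika})/(\pi^2-k^2a^2)$ (up to a reflection $x\mapsto a-x$ and phases). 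Assembling these against the entries of $\Gamma^{-1}(k)$ exhibits $F(k)$ as an explicit meromorphic function whose only singularities in $\C$ are simple poles at the zeros of the denominator $2ik-\alpha+\alpha e^{2ika}$ of $\Gamma^{-1}(k)$, i.e. at the resonance points $k_m$ of Proposition \ref{Prop2} and at their mirror points. The apparent poles at $k=\pm\pi/a$ produced by $\pi^2-k^2a^2$ are \emph{removable}, since the factor $1+e^{ika}$ vanishes there simultaneously; this is the analytic reflection of the fact that $\psi_0$ is an eigenfunction of $H_\infty$, and it guarantees that no such pole is met under the deformation below.

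Next I would locate the poles relative to the sectors of decay of $e^{-ik^2t}$. Writing $k=re^{i\theta}$ one has $|e^{-ik^2t}|=e^{r^2t\sin2\theta}$, so the factor decays for $\theta\in(-\pi/2,0)$ and the two steepest--descent rays through the saddle $k=0$ are $\arg k=-\pi/4$ and $\arg k=3\pi/4$. Because $H$ is self--adjoint, $F$ is analytic for $\Im k>0$, so it has no poles in the first or second quadrant; conjugating the resonance equation (\ref{Formula8}) shows that its zeros come in pairs $k_m$ (fourth quadrant, the resonances) and $-\bar k_m$ (third quadrant). The plan is to rotate the two halves of the contour onto the line $k=re^{-i\pi/4}$, $r\in\R$, pushing the positive half down into the fourth quadrant and the negative half up into the second. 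The left rotation sweeps only the (pole--free) second quadrant; the right rotation sweeps the sector $\arg k\in(-\pi/4,0)$ and therefore picks up the full residue of each resonance lying strictly above the ray, half the residue of one sitting on the ray, and nothing from a resonance below it. This is exactly the trichotomy encoded by $\beta_m$, since for $k_m$ in the fourth quadrant the comparison of $|\Im k_m|$ with $|\Re k_m|$ is the comparison of $\arg k_m$ with $-\pi/4$; resonances so broad that $\Re\en_m\le0$ (hence $\arg k_m\le-\pi/4$) simply do not emerge as isolated exponential terms and are absorbed into the dispersive tail.

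Collecting residues, each crossed pole yields a term $\propto e^{-ik_m^2t}$. Using $\frac{d}{dk}\bigl(2ik-\alpha+\alpha e^{2ika}\bigr)\big|_{k_m}=2i\bigl(1+a(\alpha-2ik_m)\bigr)$, where (\ref{Formula8}) was used to eliminate $\alpha e^{2ik_ma}$, the residue of $kF(k)$ at $k_m$ collapses to a multiple of $q_m$ with $q_m$ as stated (the denominator $1+a(\alpha-2ik_m)$ is the above derivative and the squared overlap $[(1+e^{ik_ma})/(\pi^2-k_m^2a^2)]^2$ is the factor $F$ inherits from the two copies of $\psi_0$). This produces precisely $-\sum_m\beta_m c_m e^{-ik_m^2t}$, $c_m=2\pi i q_m$; the series converges for every $t>0$ because $|k_m|\to\infty$, $q_m\to0$, and $|e^{-ik_m^2t}|=e^{\Im\en_m t}$ is exponentially damped. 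On the descent line $-ik^2=-r^2$, so with $k=re^{-i\pi/4}$ the integral becomes $-\frac1\pi\int_\R r\,e^{-r^2t}\,F(re^{-i\pi/4})\,dr$. Since $F$ is analytic at $k=0$, Taylor expanding and using the Gaussian moments $\int_\R r^{2}e^{-r^2t}\,dr=\tfrac{\sqrt\pi}{2}t^{-3/2}$, $\int_\R r^{4}e^{-r^2t}\,dr=\tfrac{3\sqrt\pi}{4}t^{-5/2}$ (all odd moments vanishing) gives the power law: the extra factor $k$ from the spectral measure kills the $t^{-1/2}$ term, so the leading contribution is $a_1t^{-3/2}$ with $a_1$ fixed by $F'(0)$, the ray phase $e^{-i\pi/4}$ and the moment above, while the $r^3$ term drops out and the next term is $O(t^{-5/2})$.

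The main obstacle I anticipate is twofold. Analytically, one must justify the rotation rigorously: the arcs at infinity must be shown to vanish uniformly in $t$ (using $\sin2\theta<0$ off the real axis together with the $1/k$ smallness of $F$), the convergence of the residue series must be controlled as above, and the degenerate case of a pole exactly on the ray must be treated by an indentation argument producing the factor $\tfrac12$. Computationally, the delicate point is the local analysis at the saddle $k=0$: one has to expand $F$ there and extract the exact constant $a_1$, which requires handling the removable cancellation at $k=\pi/a$ carefully inside the expansion and verifying that the neglected tail of the descent integral is genuinely $O(t^{-5/2})$ for large $t$ (which follows because the Gaussian localizes within $O(t^{-1/2})$ of the origin, well inside the radius of analyticity $\min_m|k_m|$ of $F$).
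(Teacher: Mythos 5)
Your proposal is correct and, at its core, it is the paper's own argument: the same representation of $\Am(t)$ through the resolvent kernel, the same rotation of the contour onto the ray $\arg k=-\pi/4$, the same residue bookkeeping (full, half, or no residue according to whether $k_m$ lies above, on, or below the ray, which is exactly the trichotomy $\beta_m$), the same simplification of the denominator derivative via (\ref{Formula8}) inside $q_m$, and a Laplace/Watson expansion at the saddle $k=0$ for the power-law part. The one genuine difference is the decomposition. The paper never forms the single function $F(k)$: it splits $\Am(t)=f_0(t)+f_\alpha(t)$ into the free and interaction parts, applies Watson's lemma to each separately, and then verifies by explicit computation that the two $t^{-1/2}$ contributions cancel (the term $e^{-i\pi/4}\Gamma(1/2)\,d_0\,t^{-1/2}$ with $d_0=-4a/\pi^3$ against the leading term $2\sqrt{2}\,a(1-i)\pi^{-5/2}t^{-1/2}$ of $f_0$), leaving $a_1t^{-3/2}$ as the leading power. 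You instead keep $kF(k)$ in one piece and obtain the absence of a $t^{-1/2}$ term for free from parity (the odd Gaussian moments vanish). Your version is tidier, but note that it rests entirely on the clause ``since $F$ is analytic at $k=0$'', which you assert without proof: this is \emph{not} automatic, because $k=0$ is a zero of the denominator $2ik-\alpha+\alpha e^{2ika}$ of $\Gamma^{-1}(k)$, so a priori $F$ could have a simple pole at the threshold, in which case $kF(k)$ would not vanish at $k=0$ and the leading power would be $t^{-1/2}$, not $t^{-3/2}$. Threshold regularity of $F$ is precisely equivalent to the cancellation the paper checks explicitly; with your closed form of $F$ it is a one-line verification (numerator and denominator both have simple zeros at $k=0$ when $1+a\alpha\neq 0$), but it must be stated, since it is the point where the claimed $t^{-3/2}$ law is actually decided.
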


\begin {proof}
In order to compute the survival amplitude we follow the line introduced by \cite {Sacchetti2016,Sacchetti2017}. \ In particular, we have that
\be
\langle \psi_0 , \psi_t \rangle =  \langle \psi_0 , e^{-itH} \psi_0 \rangle = 
f_0 (t) + f_\alpha (t) \, ,\ f_\alpha (t) =\sum_{j=1}^4 f_j (t)\, , 
\ee
where
\be
f_0 (t) &=&  \int_{\R} Q_0 (k) e^{-i k^2t  } dk\, ,\ Q_0 (k) = \frac {k}{\pi i} \int_{\R} \int_{\R} \overline {\psi_0 (x)} \psi_0 (y) K_0(x-y,k) dy \, dx 
\ee
is the evolution term associated to the free Laplacian; and 
\be
f_j (t) &=&  \int_{\R} Q_j (k)  e^{-i k^2 t  } dk\, ,\ Q_j (k) = -\frac {1}{4k \pi i} \int_{\R} \int_{\R} \overline {\psi_0 (x)} \psi_0 (y) K_j(x,y,k) dy \, dx \, . 
\ee
A straightforward calculation gives that
\be
Q_0 (k) 
= \frac {a\left [2i\pi^2 (1+e^{ika})+ka (\pi^2-k^2a^2)\right ]}{\pi i (k^2a^2 -\pi^2 )^2}\, . 
\ee
Concerning the terms $Q_j(k)$ for $j=1,2,3,4$ we have that
\be
Q_1 (k) &=&  -\frac {1}{4k\pi i} \int_{\R} \int_{\R} \overline {\psi_0 (x)} \psi_0 (y) K_1(x,y,k) dy dx \\ 
&=& -\frac {1}{2ak\pi i} \left [ \Gamma^{-1}  (k)\right ]_{1,1} \int_{0}^a \int_{0}^a \sin \left (\frac {\pi x}{a} \right ) \sin \left (\frac {\pi y}{a} \right ) e^{ik(|x|+|y|)}  dy dx \\
&=& -\frac {1}{2ak\pi i} \left [ \Gamma^{-1}  (k) \right ]_{1,1} \left [ \frac {\pi a (1+e^{ika})}{\pi^2-k^2a^2} \right ]^2 
\ee
and similarly
\be
Q_2 (k) &=&  -\frac {1}{2ak \pi i} \left [  \Gamma^{-1}  (k) \right ]_{1,2} \left [ \frac {\pi a (1+e^{ika})}{\pi^2-k^2a^2} \right ]^2\\
Q_3 (k) &=&  -\frac {1}{2ak\pi i} \left [ \Gamma^{-1}  (k) \right ]_{2,1} \left [ \frac {\pi a (1+e^{ika})}{\pi^2-k^2a^2} \right ]^2\\
Q_4 (k) &=&  -\frac {1}{2ak\pi i} \left [  \Gamma^{-1}  (k) \right ]_{2,2} \left [ \frac {\pi a (1+e^{ika})}{\pi^2-k^2a^2} \right ]^2
\ee
Now, let
\be
Q(k) = \sum_{j=1}^4 Q_j (k) 
= \frac {q(k)}{2k+i\alpha-i\alpha e^{2ika}}
\ee
where
\be
q(k) = -2a \pi \left [ \frac { (1+e^{ika})}{\pi^2-k^2a^2} \right ]^2 (k+i\alpha - i \alpha e^{ika} )
\ee

From the Cauchy Theorem as applied in Lemma 3 by \cite {Sacchetti2016} and the Watson's Lemma stated by \S 43.3 \cite {Fedoriuk} it follows that
\be
f_0 (t) &=& \int_{\R} Q_0 (k) e^{-ik^2 t} dk \\
&=& \frac {2 \sqrt {2} a (1-i)}{\pi^{5/2} \sqrt {t}} + \frac {a^3\sqrt{2}}{4 \pi^{9/2} t^{3/2}} \left [ (\pi^2 - 8)(1+i) \right ] +O(t^{-5/2}) \ \mbox { as } \ t \to + \infty \, .
\ee

Eventually, we have to calculate
\be
f_\alpha (t) &=& \int_{\R} \frac {q(k)}{2k+i\alpha-i\alpha e^{2ika}} e^{-k^2it } dk  \\
&=& e^{-i\pi /4}  \int_{\R} 
Q \left ( e^{-i\pi /4} \rho \right )  e^{-\rho^2 t } d\rho -  \sum_{m=1} \beta_m 2\pi i \mbox {Res} \left [ \frac {q(k) e^{-k^2it }}{2k+i\alpha -i\alpha e^{2ika}},k_m \right ]\, ,
\ee
from the Residue's Theorem; then, again Watson's Lemma gives that
\be
\int_{\R} 
Q \left ( e^{-i\pi /4} \rho \right )  e^{-\rho^2 t } d\rho =
t^{-1/2}\Gamma (1/2) d_0 + \frac 12 t^{-{3/2}} \Gamma (3/2) d_1+ O(t^{-5/2}) \ \mbox { as } \ t \to + \infty \, , 
\ee
where 
\be
d_0 &=& -\frac {4a}{\pi^3 }\\
d_1 &=& -\frac {2ia^3}{(1+a\alpha )^2\pi^5} \left [ -8 (1+a\alpha )^2 + \pi^2 \left ( a^2\alpha^2 + 2 a \alpha +3  \right ) \right ]\, . 
\ee

Concerning the calculus of the residues it follows that 
\be
\mbox {Res} \left [ \frac {q(k) e^{-k^2it }}{2k+i\alpha -i\alpha e^{2ika}},k_m \right ] 
={q_m}e^{-k^2_m it }  
\ee
where
\be
q_m=  \frac {q(k_m) }{2+2a(\alpha -2ik_m)}\, . 
\ee
Then, Theorem \ref {Teo2} follows. 
\end {proof}

\begin {remark} \label {Nota7}
From Theorem \ref {Teo2} it follows that the dominant terms of the survival amplitude for large times are
\be
a_1 t^{-3/2} \ \mbox { and } \ c_1 e^{ \Im (\en_1) t}
\ee
since $\Im \omega_m < \Im \omega_1 <0$ for $m=2,3,\ldots $. \ In particular, the first term is the dominant one when $t$ goes to infinity since $\Im \en_1 < 0$; while the second one is the dominant one for 
\be
t \le \frac {3}{2\Im \en_1} W\left ( -1 , \frac {2\Im \en_1}{3} \left ( \frac {a_1}{c_1} \right )^{2/3} \right ) \sim |\Im \en_1 |^{-1} \left | \ln (|\Im \en_1 |)\right | \ \mbox { for } \ |\Im \en_1 | \ll 1 \, . 
\ee
\end {remark}

In Figure \ref {Fig_4} we plot the absolute value of the survival amplitude $\left | \Am (t) \right |$ for different values of $\alpha$ (where we fix the units such that $a=1$) where $\psi_0 = \psi_{\infty ,1}$. \ In order to compare the results obtained by formula 
\bee
\Am (t) \sim a_1 t^{-3/2} + c_1 e^{-i k_1^2 t} \label {Formula20}
\eee
with the ones obtained when $\psi_t$ is computed by formula (\ref {Formula15}) and Theorem \ref {Teo1} we compute in Table \ref {Tavola2}  the maximum $\Delta$ of the absolute value of their difference for $t\in [0.5,5]$ and we see that this difference turn out to be very small; thus the two results fully agree. \ We have to point out that the computation of $\Am (t)$ by means of formula (\ref {Formula15}) can be done, in principle, for any time $t$ but it is much more time-consuming than the simple formula (\ref {Formula20}) that properly works when $t$ is not too small (e.g. $t \ge 0.5$) in the considered experiment.

\begin{table}
\begin{center}
\begin{tabular}{|c|c|c|c|c|c|} 
\hline
$\alpha $  &  $k_1 $   & $ \en_1 $   &  $c_1 $   
& $a_1 $ & $\Delta$   \\ \hline \hline
$1 $  &  $ 2.2986-0.7660i $   & $4.6966-3.5216i $  
& $ -1.1943+0.4624i$  &  $ -0.1011\cdot 10^{-1} (1+i)$  & $0.96 \cdot 10^{-2}$   \\ \hline
$10 $  &  $2.8776-0.0665i $   & $8.2766-0.3828i $   &  $ -0.9898+0.0303i$   
& $-0.3331 \cdot 10^{-3} (1+i) $  & $0.72 \cdot 10^{-3}$   \\ \hline
$20 $  &  $2.9958-0.0205i $   & $8.9742-0.1231i $   &  $-0.9950+0.0085i $   
& $-0.9166 \cdot 10^{-4} (1+i) $  & $0.22 \cdot 10^{-3}$   \\ \hline
$40 $  &  $ 3.0655-0.0057i$   & $9.3974-0.0347i $   &  $-0.9983+0.0021i $   
& $-0.2405 \cdot 10^{-4} (1+i) $  & $0.21 \cdot 10^{-3}$   \\ \hline
\end{tabular}
\caption{Table of values of the quantum resonance $\en_1 =k_1^2$ and of the numerical coefficients $c_1$ and $a_1$ of formula (\ref {Formula20}) corresponding to the linear problem with repulsive singular potential for different values of the 
strength $\alpha$; for argument's sake sake we fix  the units such that $a=1$. \ The parameter $\Delta$ is the maximum of the absolute value of the difference between the survival amplitude computed with formula (\ref {Formula20}) and the survival amplitude computed with formula (\ref {Formula15}) and Theorem \ref {Teo1} for $t\in [0.5,5]$}
\label{Tavola2}
\end{center}
\end {table}

\begin{figure}
\begin{center}
\includegraphics[height=6cm,width=8cm]{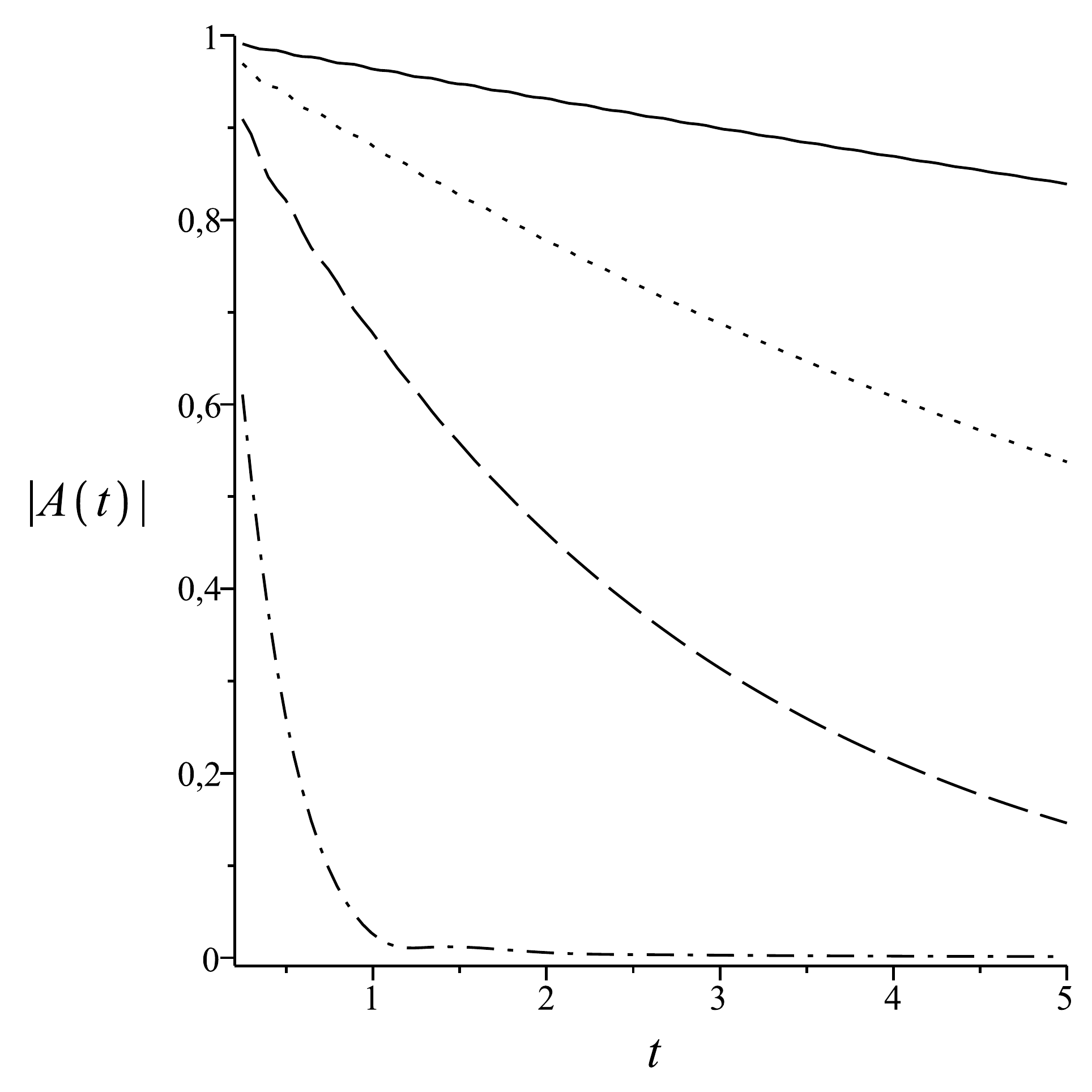}
\caption{Absolute value of the survival amplitude $ |\Am (t)  |$, where $\psi_0$ is given by $\psi_{\infty, 1}$,  for different values of $\alpha$ (full  line corresponds to $\alpha =40$, dot line corresponds to $\alpha =20$, broken line corresponds to $\alpha =10$ and broken-dot line corresponds to $\alpha =1$).}
\label {Fig_4}
\end{center}
\end{figure}

\section {Analysis of the nonlinear Winter's model} \label {S3}

In this section we consider stationary states and quantum resonances for the nonlinear Schr\"odinger equation 
\bee
\left \{
\begin {array}{l}
i\dot \psi_t = H \psi_t + \eta |\psi_t |^2 \psi_t  \\
\left. \psi_t \right |_{t=0} = \psi_0 
\end {array}
\right.  \, , \ \psi_t \in L^2 (\R^+) \, , \ \| \psi_0 \| = 1\, .  \label {Formula1}
\eee
As in the previous Section let us omit the dependence on $\alpha$ when this fact does not cause misunderstanding. \ In this Section we denote by $\Omega$ the energy value of the stationary states.

\begin {remark} \label {Nota8}
Similarly to the case of a single Dirac's $\delta$ potential \cite {ASa} one expects that the solution to (\ref {Formula1}) globally exists; however, we don't dwell here on a detailed proof of this result. \ Furthermore, a formal straightforward calculation proves the conservation of the norm and of the energy, i.e.
\be
\| \psi_t \| = \| \psi_0 \| \ \mbox { and } \ En (\psi_t )= En (\psi_0) \, , \ \forall t \, , 
\ee
where
\be
En (\psi )= \langle \psi , H \psi \rangle + \frac 12 \eta \| \psi \|_{L^4}^4 \, . 
\ee
\end {remark}

\subsection {Stationary states - preliminary results} \label {Ss31} We look for stationary solutions to the equation (\ref {Formula1}); that is $\psi_t (x) = e^{-i\En t} \psi (x)$ where ${\En}$ is real-valued and $\psi (x)$ is a solution to the equation
\bee
H \psi + \eta |\psi |^2 \psi = {\En} \psi  \,  , \psi \in L^2 (\R^+ ) \, ,\ \| \psi \| =1 \, . \label {Formula21}
\eee

\begin {remark} \label {Nota11}
We should point out that when one looks for stationary solutions to the linear problem (\ref {Formula2}) the normalization condition $\| \psi \| =1$ does not play a crucial role, we only have to require that $\psi \in L^2$. \ This is not the case in nonlinear Schr\"odinger equations; indeed, for any fixed $\eta$ the normalization condition $\| \psi \| =c$  affects the energy ${\En}$ of the associated stationary solutions. \ For argument's sake and without loosing in generality we fix such a value $c$ equal to one; if not we simply rescale
\be
\psi \to \frac {\psi }{c} \ \mbox { and } \ \eta \to c^2 \eta \, .
\ee
\end {remark}

First of all we prove that if a solution $\psi $ to (\ref {Formula21}) there exists then $\psi $ is, up to a constant phase factor, a real-valued function. 

\begin {proposition} \label {Prop3}
Let $\psi \in L^2 (\R^+ )$ be a solution to the nonlinear equation  (\ref {Formula21}), where ${\En}$ and $\eta$ are real-valued, satisfying conditions (\ref {Formula4}) and (\ref {Formula5}); then $\psi (x)$ is, up to a constant phase factor, a real-valued function.
\end {proposition}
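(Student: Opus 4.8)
The plan is to reduce the statement to the uniqueness theorem for a second-order linear ODE by exploiting the phase invariance of the problem. First I would observe that the stationary equation (\ref{Formula21}), together with the homogeneous conditions (\ref{Formula4}) and (\ref{Formula5}), is invariant under the global gauge transformation $\psi \mapsto e^{i\gamma}\psi$ with $\gamma \in \R$: since $H$ is linear and $|e^{i\gamma}\psi|^2 = |\psi|^2$, the function $e^{i\gamma}\psi$ solves the same problem and satisfies the same (linear, homogeneous) boundary and matching conditions. Because a nontrivial solution cannot satisfy $\psi(0)=0$ and $\psi'(0)=0$ at once --- otherwise local uniqueness would force $\psi\equiv 0$ on $(0,a)$ and, through the matching condition, on all of $\R^+$, contradicting $\|\psi\|=1$ --- I may choose $\gamma$ so that $\psi'(0)$ becomes real. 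It then suffices to prove that this gauge-fixed $\psi$ is itself real-valued.

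Next I would split $\psi = u + iv$ into its real and imaginary parts and insert this into (\ref{Formula21}). Since $|\psi|^2 = u^2 + v^2$ is real, the real and imaginary parts of the equation read $u'' + p\,u = 0$ and $v'' + p\,v = 0$ on $(0,a)\cup(a,+\infty)$, with the \emph{same} real-valued coefficient $p(x) = \Omega - \eta\,(u(x)^2+v(x)^2)$. The crucial observation is that $u$ and $v$ obey one and the same real linear second-order equation; because $\psi\in H^{2,2}(\R^+\setminus\{a\})$ is $C^1$ on each interval, $p$ is continuous there and the standard existence--uniqueness theorem for the Cauchy problem applies on $(0,a)$ and on $(a,+\infty)$.

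Then I would track the imaginary part $v$. The Dirichlet condition (\ref{Formula4}) gives $v(0)=0$, while the gauge choice $\psi'(0)\in\R$ gives $v'(0)=0$; by uniqueness for $v'' + pv = 0$ with vanishing Cauchy data at $x=0$, it follows that $v\equiv 0$ on $(0,a)$. Hence $v(a-)=v'(a-)=0$, and the matching conditions (\ref{Formula5}), read off in their imaginary parts, yield $v(a+)=v(a-)=0$ and $v'(a+)=v'(a-)+\alpha\, v(a)=0$. Restarting the uniqueness argument on $(a,+\infty)$ with these vanishing data shows $v\equiv 0$ there as well. Thus $v\equiv 0$ on all of $\R^+$, the gauge-fixed $\psi=u$ is real, and undoing the gauge transformation proves that the original $\psi$ is real up to the constant phase factor $e^{i\gamma}$.

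I do not expect a deep analytic obstacle here; the delicate point is bookkeeping rather than estimation. The main thing to handle carefully is the propagation of the vanishing of $v$ across the singular point $x=a$: one must check that the splitting into real and imaginary parts is compatible with the jump in (\ref{Formula5}) and that $p$ remains a genuine continuous coefficient on each open interval, so that ODE uniqueness is legitimately invoked both before and after $a$. A secondary point requiring attention is the degenerate case $\psi'(0)=0$, which must be excluded (it forces the trivial solution, incompatible with $\|\psi\|=1$) before the gauge fixing is meaningful.
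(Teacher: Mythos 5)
Your proof is correct, but it takes a genuinely different route from the paper's. The paper introduces the current $W(x) = \psi'\bar\psi - \psi\bar\psi'$, shows by differentiating (using that $\eta$ and $\Omega$ are real) that $W$ is constant on $(0,a)$ and on $(a,+\infty)$, that the matching conditions (\ref{Formula5}) force the two constants to coincide, and that the Dirichlet condition (\ref{Formula4}) forces the common value to be zero; it then writes $\psi = \phi_j e^{i\theta_j}$ in polar form on each interval and reads off from $W \equiv 0$ that the phases $\theta_j$ are constant, the matching conditions giving $\theta_2 - \theta_1 \in 2\pi\Z$. You instead gauge-fix the phase so that $\psi'(0) \in \R$ (after excluding the degenerate case $\psi'(0)=0$), decompose $\psi = u + iv$, observe that $u$ and $v$ solve the \emph{same} real linear equation $w'' + p w = 0$ with $p = \Omega - \eta\,(u^2+v^2)$ continuous, and propagate $v \equiv 0$ by Cauchy uniqueness across the two intervals, using $\alpha \in \R$ to transfer the vanishing data through the jump at $x=a$. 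The two arguments are close relatives: the paper's $W$ equals $2i\,(uv'-u'v)$, i.e.\ $2i$ times the Wronskian of your two solutions of the common linear ODE, so ``$W \equiv 0$'' is precisely the statement that $u$ and $v$ are linearly dependent. Still, the mechanics differ, and each buys something. Your version avoids a delicate point that the paper passes over silently: the polar decomposition $\psi = \phi e^{i\theta}$ with $\phi \ge 0$ is problematic at interior zeros of $\psi$, where $\theta$ can jump by $\pi$ (as it does for real, sign-changing profiles), whereas working with $u$ and $v$ directly raises no such issue. Conversely, the paper's argument is a self-contained computation with a conserved quantity: it needs no gauge-fixing, no appeal to an ODE uniqueness theorem, and no separate treatment of the degenerate Cauchy data case, since the current vanishes automatically at $x=0$. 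One small polish to your write-up: the exclusion of $\psi(0)=\psi'(0)=0$ is cleanest via the very same linear-coefficient trick (global uniqueness for $w''+pw=0$ on $[0,a]$, then matching at $a$), which spares you any continuation argument for the nonlinear equation.
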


\begin {proof}
We have that 
\be
W(x):= \left [ \psi' \bar \psi - \psi \bar \psi ' \right ] = 
\left \{
\begin {array}{ll}
c_1 & \mbox { if } x \in (0,a) \\ 
c_2 & \mbox { if } x \in (a,+\infty ) 
\end {array}
\right. 
\ee
is a piece-wise constant function. \ Indeed, since $\psi$ satisfies to the equation
\bee
-\psi '' + \eta |\psi |^2 \psi = {\En} \psi \, , \ x \in (0,a) \cup (a,+\infty ) \, , \label {Formula22}
\eee
then if we multiply both sides by $\bar \psi$ and   take the difference of the resulting terms with their complex conjugate we have that 
\bee
\frac {dW}{dx} = \left [ \psi'' \bar \psi - \psi \bar \psi '' \right ] = 0 \, , \ x \in (0,a) \cup (a,+ \infty ) \, , \label {Formula23}
\eee
since $\eta $ and $\En$ are real-valued parameters. \
Let 
\be
W_\pm = W(a\pm 0) :=\lim_{x\to a^\pm} W(x)
\ee
be the right (+) and left (-) limit of $W(x)$ at $x=a$; then conditions (\ref {Formula5}) imply that 
\be
W_+ &=& \psi' (a+0) \bar \psi (a+0) - \psi (a+0) \bar \psi ' (a+0)  \\
&=& \left [ \psi' (a-0) + \alpha \psi (a+0) \right ] \bar \psi (a+0) - \psi (a+0) \left [ \bar \psi' (a-0) + \alpha \bar \psi (a+0) \right ]  \\
&=&  \psi' (a-0) \bar \psi (a-0) - \psi (a-0) \bar \psi' (a-0) = W_-\, . 
\ee
Hence, $c_2=c_1$. \ Furthermore, condition (\ref {Formula4}) implies that $c_1=0$ since $\psi (0) =0$. \ Now, if we set
\be
\psi (x) = 
\left \{ 
\begin {array}{ll}
\phi_1 (x) e^{i\theta_1 (x)} & \mbox { if } x \in (0,a) \\
\phi_2 (x) e^{i\theta_2 (x)} & \mbox { if } x \in (a,+\infty)
\end {array}
\right.
\ee
where $\phi_{1,2}(x) \ge 0 $ and $\theta_{1,2} (x)$ are real-valued, then equation $\psi' \bar \psi - \psi \bar \psi ' =0$ implies that $\theta_j$ are constant functions and thus the matching conditions (\ref {Formula5}) implies that $\theta_2 - \theta_1 = 2n \pi $ for some integer number $n$. 
\end {proof}

\begin {remark} \label {Nota12}
From Proposition \ref {Prop3} it follows that when  $\eta$ and $\En$ are real-valued then equation (\ref {Formula21}) takes the form
\bee
H \psi + \eta \psi^3 = {\En} \psi  \, , \ \| \psi \| =1 \, . \label {Formula24}
\eee
We have to point out that this is not the case when $\En$ is complex-valued with non-zero imaginary part. 
\end {remark}

Now, we look for solutions to (\ref {Formula24}) in the case where $\alpha = + \infty$ at first and then for any $\alpha \in \R$.

\subsection {Stationary states - Infinite barrier: $\alpha = + \infty$.} \label {Ss32}

We separately treat the case of de-focusing nonlinearity, where $\eta >0$, and the case of focusing nonlinearity, where $\eta <0$.

\subsubsection {De-focusing nonlinearity: $\eta >0$.} \label {Ss321}

 It is well known \cite {D} that the general real-valued solution to the equation
\be
-\psi '' + \eta \psi^3 = {\En} \psi \, , \ \eta >0 \, , 
\ee
may be written as 
\be
\psi (x) = C \mbox {sn} \left ( \lambda (x-x_0 ) , p \right ) \, , \ p \in [0,1]\, , 
\ee
where $\mbox {sn} (x,p)$ is the Jacobi elliptic function and 
\be
p^2 = -\frac {\lambda^2-{\En}}{\lambda^2} \ \mbox { and } \ C^2 = - \frac {2(\lambda^2-{\En})}{\eta} = \frac {2p^2 \lambda^2}{\eta}\, ,
\ee
for some $C,\lambda \in \R$. \ In such a case the solution to (\ref {Formula24}) when $\alpha =+\infty$ is given by 
\be
\psi (x) = 
\left \{
\begin {array}{ll}
C \mbox { sn} \left ( \lambda (x-x_0 ) , p \right ) & ,\ x \in (0,a)  \\ 
0 & ,\ a< x  
\end {array}
\right. \, ,
\ee
with Dirichlet boundary conditions 
\bee
\psi (0)=\psi (a)=0 \, ; \label {Formula25}
\eee
that is
\be
\left \{
\begin {array}{l}
 \mbox { sn} \left ( \lambda x_0  , p \right ) = 0 \\
 \mbox { sn} \left ( \lambda (a-x_0 ) , p \right )=0 
 \end {array}
 \right. 
 \, .
\ee
Hence $ x_0 =0$ is a zero of the Jacobi elliptic function $\mbox {sn}$ and $\lambda$ is such that
\be
\lambda a = 2m {\KEll}(p) \, , \ m=1,2,\ldots \, ,
\ee
where ${\KEll}(p)$ is the value of the complete elliptic integral of first kind. 

The normalization condition implies that 
\be
1 = C^2 \int_0^a \left [ 
\mbox { sn} \left ( \lambda (x-x_0 ) , p \right ) \right ]^2 d x = C^2 \frac {2m}{\lambda} \frac {{\KEll}(p)-{\EEll}(p)}{p^2}
\ee
where ${\EEll}(p)$ is the complete elliptic integral  of second kind.

In conclusion, the following conditions must be satisfied
\be
\left \{
\begin {array}{l}
x_0 =0\\
\lambda a = 2m {\KEll}(p) \\
C^2 \frac {2m}{\lambda} \frac {{\KEll}(p)-{\EEll}(p)}{p^2} =1 \\
C^2 = \frac {2}{\eta} p^2 \lambda^2 
\end {array}
\right. 
\ee
that imply the following equation for $p \in [0,1]$:
\bee 
{\mathcal G}_+(p):={\KEll}(p) \left [ {\KEll}(p) - {\EEll}(p) \right ] =\frac {a\eta }{ 8m^2} \, . \label {Formula26}
\eee

Since the function ${\mathcal G}_+(p)$ is a monotone increasing function such that
\be
{\mathcal G}_+ (0+0)=0 \ \mbox { and } \ {\mathcal G}_+ (1-0) = + \infty
\ee
then equation (\ref {Formula26}) has exactly one real-valued solution $p_m \in [0,1)$ for any $m\in \N $.

In conclusion, we have proved that

\begin {proposition} \label {Prop4}
Let $\eta >0$ and let $p_m \in [0,1)$ be the unique solution to the equation
\be
{\KEll}(p) \left [ {\KEll}(p) - {\EEll}(p) \right ] =\frac {a\eta }{ 8m^2}
 \, , \ m\in \N \, . 
\ee
Let
\be
\lambda_m = \frac {2m{\KEll}(p_m)}{a}\, ,  \ C_m =\sqrt {2/\eta } p_m \lambda_m \ \mbox { and } \ {\En}_m = \lambda_m^2 (1+p_m^2) \, . 
\ee
Then 
\bee
\psi_m (x) = 
\left \{
\begin {array}{ll}
C_m \mbox {\rm sn} \left ( \lambda_m x , p_m \right ) & ,\ x \in (0,a)  \\ 
0 & ,\ a< x  
\end {array}
\right. \, , \label {Formula27}
\eee
is a stationary solution to (\ref {Formula24}) normalized to one.
\end {proposition}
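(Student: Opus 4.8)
The plan is to proceed by explicit construction and verification, since the general real-valued solution of the reduced equation $-\psi'' + \eta \psi^3 = \En \psi$ with $\eta>0$ is classically known \cite{D} to be $\psi(x) = C\,\sn(\lambda(x-x_0),p)$, subject to the parameter constraints $p^2 = (\En-\lambda^2)/\lambda^2$ and $C^2 = 2p^2\lambda^2/\eta$. First I would record that these constraints force $\En = \lambda^2(1+p^2)$, which identifies $\En_m$ once $\lambda_m$ and $p_m$ are fixed, and that $C^2 = 2p^2\lambda^2/\eta$ gives the stated $C_m$. On the half-line $(a,+\infty)$ the function $\psi\equiv 0$ solves the equation trivially, and by Remark \ref{Nota1} the matching condition (\ref{Formula5}) for $\alpha=+\infty$ collapses to the Dirichlet condition $\psi(a)=0$; hence it suffices to solve the equation on $(0,a)$ subject to (\ref{Formula25}).

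Next I would impose the two Dirichlet conditions. Since $\sn(u,p)$ vanishes precisely at the even multiples of the quarter-period, i.e. at $u \in 2\Z\,\KEll(p)$, the condition $\psi(0)=0$ is met by taking $x_0=0$, while $\psi(a)=0$ then requires $\lambda a = 2m\KEll(p)$ for some $m\in\N$, which defines $\lambda_m = 2m\KEll(p_m)/a$. This pins down $\lambda$ as a function of $p$ and the mode number $m$, leaving $p$ as the only free parameter.

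I would then impose the normalization $\int_0^a \psi^2\,dx = 1$. Invoking the standard elliptic identity $\int_0^{2\KEll(p)}\sn^2(u,p)\,du = 2[\KEll(p)-\EEll(p)]/p^2$ over the $m$ half-periods contained in $(0,a)$ gives $C^2 \frac{2m}{\lambda}\frac{\KEll(p)-\EEll(p)}{p^2}=1$; substituting $C^2 = 2p^2\lambda^2/\eta$ and $\lambda = 2m\KEll(p)/a$ reduces this to the single transcendental equation ${\mathcal G}_+(p):=\KEll(p)[\KEll(p)-\EEll(p)] = a\eta/(8m^2)$.

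The main obstacle is to show this equation has exactly one admissible root, i.e. to establish the qualitative behaviour of ${\mathcal G}_+$. I would argue that on $(0,1)$ both factors $\KEll(p)$ and $\KEll(p)-\EEll(p)$ are strictly positive and strictly increasing — the second because $\frac{d}{dp}[\KEll(p)-\EEll(p)] = p\,\EEll(p)/(1-p^2)>0$, which follows from the standard derivative formulas $\EEll'=(\EEll-\KEll)/p$ and $\KEll'=[\EEll-(1-p^2)\KEll]/[p(1-p^2)]$ — so that their product ${\mathcal G}_+$ is strictly increasing. Combined with the boundary values ${\mathcal G}_+(0^+)=0$ (since $\KEll(0)=\EEll(0)=\pi/2$) and ${\mathcal G}_+(1^-)=+\infty$ (since $\KEll(p)\to+\infty$ while $\EEll(1)=1$), the intermediate value theorem yields a unique $p_m\in[0,1)$ for every $m\in\N$. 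Unwinding the definitions of $\lambda_m$, $C_m$ and $\En_m$ then confirms that (\ref{Formula27}) is a stationary solution to (\ref{Formula24}) normalized to one, completing the argument.
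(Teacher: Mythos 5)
Your proposal is correct and follows essentially the same route as the paper's proof: reduce to the known $\sn$-form of the solution on $(0,a)$ with the zero extension past $x=a$, impose the Dirichlet conditions to get $x_0=0$ and $\lambda a = 2m{\mathcal K}(p)$, use the normalization to arrive at ${\mathcal G}_+(p)={a\eta}/{8m^2}$, and conclude by monotonicity of ${\mathcal G}_+$. The only difference is that you actually prove the strict monotonicity and the boundary values of ${\mathcal G}_+$ (via the derivative identity $\frac{d}{dp}[{\mathcal K}(p)-{\mathcal E}(p)]=p\,{\mathcal E}(p)/(1-p^2)>0$, which is correct), whereas the paper merely asserts these facts.
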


\begin {remark} \label {Nota13}
In the limit case of $\eta \to 0$ then $p_m\to 0$ for any $n$, hence
\be
{\En}_m\to \lambda^2_m \to \left ( \frac {2m}{a}{\KEll}(0)\right )^2 = \left ( \frac {m\pi}{a} \right )^2  =  \en_{\infty , m} 
\ee
in agreement with the linear model.
\end {remark}

\subsubsection {Stationary states - Focusing nonlinearity: $\eta <0$.} \label {Ss322}

When $\eta <0$ then Proposition \ref {Prop4} takes the form

\begin {proposition} \label {Prop5}
Let $\eta <0$ and let $p_m \in [0,1)$ be the unique solution to the equation
\be
{\KEll}(p) \left [ {\EEll}(p) - (1-p^2) {\KEll}(p) \right ]  =  \frac {a|\eta |}{ 8m^2} \, , \ m\in \N . 
\ee
Let
\be
\lambda_m = \frac {2m{\KEll}(p_m)}{a} \, , \ C_m =\sqrt {2/|\eta |} p_m \lambda_m \ \mbox { and } \ {\En}_m = \lambda_m^2 (1-2p_m^2) \, . 
\ee
Then 
\bee
\psi_m (x) = 
\left \{
\begin {array}{ll}
C_m  \mbox {\rm cn} \left ( \lambda_m x-{\KEll}(p_m) , p_m \right )& ,\ x \in (0,a)  \\ 
0 & ,\ a< x  
\end {array}
\right. \, , \label {Formula28}
\eee
is a stationary solution to (\ref {Formula24}) normalized to one.
\end {proposition}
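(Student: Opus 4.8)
The plan is to mirror the defocusing derivation that precedes Proposition \ref{Prop4}, replacing the $\sn$ ansatz by the $\cn$ ansatz dictated by the attractive sign of the nonlinearity. Writing $\eta=-|\eta|$, equation (\ref{Formula24}) reads $\psi''=-\En\psi-|\eta|\psi^3$ on $(0,a)$, and (see \cite{D}) the relevant bounded real-valued solutions of this autonomous ODE are, up to translation, $\psi(x)=C\,\cn(\lambda(x-x_0),p)$ with $p\in[0,1]$. First I would insert this ansatz and use the differential identity $\cn''(u,p)=-(1-2p^2)\cn(u,p)-2p^2\cn^3(u,p)$, which follows from $\sn'=\cn\,\dn$, $\cn'=-\sn\,\dn$, $\dn'=-p^2\sn\,\cn$ together with $\sn^2+\cn^2=1$ and $\dn^2=(1-p^2)+p^2\cn^2$. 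Matching the coefficients of $\cn$ and of $\cn^3$ separately then forces $\En=\lambda^2(1-2p^2)$ and $C^2=2p^2\lambda^2/|\eta|$, exactly the relations quoted in the statement.

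Next I would impose the Dirichlet conditions (\ref{Formula25}). The decisive difference from the $\sn$ case is that the zeros of $\cn(\cdot,p)$ occur at the odd multiples $(2n+1)\KEll(p)$ rather than at the even ones, and this is precisely why the argument in (\ref{Formula28}) is shifted by $\KEll(p_m)$. Choosing $x_0$ so that $\lambda x_0=\KEll(p)$ gives $\psi(0)=C\,\cn(-\KEll(p),p)=0$ by evenness of $\cn$, while requiring $\psi(a)=0$ forces $\lambda a-\KEll(p)$ to be an odd multiple of $\KEll(p)$, i.e. $\lambda_m a=2m\KEll(p_m)$, which is the stated value of $\lambda_m$. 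Then I would impose the normalization $\int_0^a\psi^2\,dx=1$: substituting $u=\lambda_m x-\KEll(p_m)$ sends the integration interval to $[-\KEll,(2m-1)\KEll]$, of length $2m\KEll$, and the evenness and $2\KEll$-periodicity of $\cn^2$ reduce the integral to $2m\int_0^{\KEll}\cn^2(u,p)\,du$. Using $\int_0^{\KEll}\dn^2=\EEll$ together with $\dn^2=1-p^2\sn^2$ and $\cn^2=1-\sn^2$ gives the standard evaluation $\int_0^{\KEll}\cn^2\,du=[\EEll(p)-(1-p^2)\KEll(p)]/p^2$, so that after substituting $C^2$ and $\lambda_m$ the normalization collapses to $\KEll(p)\,[\EEll(p)-(1-p^2)\KEll(p)]=a|\eta|/(8m^2)$, the equation defining $p_m$.

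Finally, to establish existence and uniqueness of $p_m\in[0,1)$ I would study $\mathcal{G}_-(p):=\KEll(p)\,[\EEll(p)-(1-p^2)\KEll(p)]$. Using the derivative formulas $d\EEll/dp=(\EEll-\KEll)/p$ and $d\KEll/dp=[\EEll-(1-p^2)\KEll]/[p(1-p^2)]$ one obtains the clean identity $\frac{d}{dp}[\EEll-(1-p^2)\KEll]=p\,\KEll(p)$, which shows that $\EEll-(1-p^2)\KEll$ vanishes at $p=0$ and is strictly increasing, hence positive on $(0,1)$; since $\KEll$ is itself positive and increasing there, $\mathcal{G}_-$ is strictly increasing. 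Combined with the boundary values $\mathcal{G}_-(0^+)=0$ and $\mathcal{G}_-(1^-)=+\infty$ (from $\KEll(p)\to+\infty$, $\EEll(1)=1$ and $(1-p^2)\KEll(p)\to0$), this yields a unique $p_m$ for each $m\in\N$ and completes the proof. The only step requiring genuine care is this last monotonicity argument, since it rests on the derivative identities for the complete elliptic integrals; everything else is the routine $\cn$-analogue of the defocusing computation behind Proposition \ref{Prop4}.
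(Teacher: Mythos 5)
Your proposal is correct and follows essentially the same route as the paper: the $\cn$ ansatz with $\lambda x_0=\KEll(p)$, the quantization $\lambda a=2m\KEll(p)$, the normalization reducing to ${\mathcal G}_-(p)=a|\eta|/(8m^2)$, and the monotonicity of ${\mathcal G}_-$ to get existence and uniqueness of $p_m$. The only difference is that you supply explicit verifications (the $\cn''$ identity and the derivative computation showing ${\mathcal G}_-$ is strictly increasing) where the paper cites \cite{D} and simply asserts the monotonicity, so your write-up is, if anything, more complete.
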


\begin {remark} \label {Nota14}
Like in the case of de-focusing nonlinearity even in this case we have that $\En_m \to \en_{\infty , m}$ as $\eta \to 0$. 
\end {remark}

\begin {proof} It is well known \cite {D} that the general real-valued solution to the equation
\be
-\psi '' + \eta \psi^3 = {\En} \psi \, , \ \eta < 0 \, , 
\ee
may be written in the form
\be
\psi (x) = C \mbox {cn} \left ( \lambda (x-x_0 ) , p \right ) 
\ee
where 
\be
p^2 = \frac {\lambda^2-{\En}}{2\lambda^2} \ \mbox { and } \ C^2 = - \frac {\lambda^2-{\En}}{\eta}= - \frac {2p^2 \lambda^2}{\eta} \, ,
\ee
for some $C,\lambda \in \R$. \ In such a case the solution to (\ref {Formula24}) when $\alpha =+\infty$ is given by 
\be
\psi (x) = 
\left \{
\begin {array}{ll}
C \mbox { cn} \left ( \lambda (x-x_0 ) , p \right ) & ,\ x \in (0,a)  \\ 
0 & ,\ a< x  
\end {array}
\right. \, ,
\ee
with Dirichlet boundary conditions (\ref {Formula25}); 
that is
\be
\left \{
\begin {array}{l}
 \mbox { cn} \left ( \lambda x_0  , p \right ) = 0 \\
 \mbox { cn} \left ( \lambda (a-x_0 ) , p \right )=0 
 \end {array}
 \right. 
 \, .
\ee
Hence $\lambda x_0 ={\KEll}(p)$ is a zero of the Jacobi elliptic function $\mbox {cn}$ and $\lambda$ is such that
\be
\lambda a = 2m {\KEll}(p) \, , \ m=1,2,\ldots \, .
\ee

The normalization condition implies that 
\be
1 &=& C^2 \int_0^a \left [ 
\mbox { cn} \left ( \lambda (x-x_0 ) , p \right ) \right ]^2 d x = m\frac {2C^2}{\lambda p^2}  \left [ {\EEll}(p) - (1-p^2) {\KEll}(p) \right ] \, . 
\ee

In conclusion, the following conditions must be satisfied
\be
\left \{
\begin {array}{l}
\lambda x_0 ={\KEll}(p)\\
\lambda a = 2m {\KEll}(p) \\
2m\frac {C^2}{\lambda p^2}  \left [ {\EEll}(p) - (1-p^2) {\KEll}(p) \right ] =1 \\
C^2 = - \frac {2}{\eta} p^2 \lambda^2 
\end {array}
\right. 
\ee
that imply the following equation for $p \in [0,1]$
\bee 
{\mathcal G}_- (p):={\KEll}(p) \left [ {\EEll}(p) - (1-p^2) {\KEll}(p) \right ] =\frac {a|\eta |}{ 8m^2} \, . \label {Formula29}
\eee

Since the function ${\mathcal G}_- (p)$ is a monotone increasing function such that
\be
{\mathcal G}_- (0+0)=0 \ \mbox { and } \ {\mathcal G}_- (1-0) = + \infty
\ee
then equation (\ref {Formula29}) has exactly one real-valued solution $p_m \in [0,1)$ for any $m \in \N  $ and any $\eta < 0$. \ Proposition \ref {Prop5} is so proved.
\end {proof}

\subsection {Stationary states - Finite barrier: $\alpha \in \R$.} \label {Ss33}

Recalling that 
\be
\lim_{p \to 1^-} \mbox {sn}(u,p) = \mbox {tanh} (u) \ \mbox { and } \lim_{p \to 1^-} \mbox {cn}(u,p) = \mbox {sech} (u)
\ee
and since we look for a real-valued solution $\psi (x)$  to (\ref {Formula24}) such that $\psi (x) \to 0$ as $x\to + \infty$ then such a solution there exists only when $\eta <0$ and ${\En}<0$ and it has the form 
\be
\psi (x) = 
\left \{
\begin {array}{ll}
C \mbox { cn} \left ( \lambda (x-x_0 ) , p \right ) & ,\ x \in (0,a) \ \mbox { where } p^2 = \frac {\lambda^2 - \Omega}{2\lambda^2} \ \mbox { and } \ C^2 = - \frac {\lambda^2 - \Omega}{\eta} \\ 
C' \mbox { sech} \left ( \lambda' (x-x_0' ) \right ) & ,\ a< x \ \mbox { where } {\lambda'}^2 =-{\En} \ \mbox { and } \ {C'}^2 =  \frac {2{\En}}{\eta} 
\end {array}
\right. \, .
\ee
Hereafter,  we may assume, for argument's sake, that $\lambda >0$ and $\lambda ' >0$.

The Dirichlet boundary condition $\psi (0)=0$ at $x=0$ implies that $\lambda x_0 ={\KEll}(p)$ is a zero of the Jacobi elliptic function $\mbox {cn}$. \ The matching condition (\ref {Formula5}) at $x=a$ implies that 
\be
\left \{
\begin {array}{l}
C' \mbox {sech} \left ( \lambda' (a-x_0' ) \right ) - C \mbox { cn} \left ( \lambda (a-x_0 ) , p \right ) =0 \\
C' \mbox {sech} (\lambda' (a-x_0')) \left [ \lambda' \mbox {tanh} (\lambda' (a-x_0')) + \alpha \right ] - C \lambda \mbox {sn}(\lambda (a-x_0),p) \mbox {dn}(\lambda (a-x_0),p)  = 0
\end {array}
\right.
\ee
Furthermore, we have to require the normalization condition:
\be
C^2 \int_0^a \left [ \mbox { cn} \left ( \lambda (x-x_0 ) , p \right ) \right ]^2 dx + {C'}^2 \int_a^{+\infty} \left [ \mbox { sech} \left ( \lambda' (x-x_0' ) \right ) \right ]^2 dx =1 
\ee
that is 
\be
\frac {C^2}{\lambda} G(\lambda a)  + \frac {{C'}^2}{\lambda'} \left [1- \mbox {tanh} \left (  \lambda' (a- x_0') \right )  \right ]  =1 
\ee
where 
\be
G(u):= \int_0^{u} \left [ \mbox { cn} \left ( q + {\KEll}(p), p \right ) \right ]^2 dq = \frac {u-{\EEll} \left [ \mbox {sn}(u,p)\right ] }{p^2} + 2 \tilde n \frac {{\KEll}(p)-{\EEll}(p)}{p^2} 
\ee
where $ \tilde n = \mbox {round} \left [ u/2/{\KEll} (p) \right ]$. \ In conclusion, the parameters $C$, $C'$, $\lambda$, $\lambda'$, $ x_0$, $ x_0'$  and $ p$ must satisfy to the following conditions
\be
\left \{
\begin {array}{l}
\lambda x_0 ={\KEll}(p)\\
C' \mbox {sech} \left ( \lambda' (a-x_0' ) \right ) - C \mbox { cn} \left ( \lambda (a-x_0) , p \right ) =0 \\
C' \mbox {sech} (\lambda' (a-x_0')) \left [ \lambda' \mbox {tanh} (\lambda' (a-x_0')) + \alpha \right ] - C \lambda \mbox {sn}(\lambda (a-x_0),p) \mbox {dn}(\lambda (a-x_0),p)  = 0 \\
\frac {C^2}{\lambda} G (\lambda a)  + \frac {{C'}^2}{\lambda'} \left [1- \mbox {tanh} \left (  \lambda' ( a- x_0') \right )  \right ]  =1 \\
C^2 = - \frac {2}{\eta} p^2 \lambda^2 \\
 {C'}^2 = - \frac {2{\lambda'}^2}{\eta} \\
 (1-2p^2) \lambda^2 = -{\lambda '}^2 
\end {array}
\right. 
\ee
with some constrains: e.g. $\eta <0$ and $2p^2 -1 \ge 0$, that is $p \in [1/\sqrt {2},1]$.

The numerical study of such a system of equations gives that (see Figure \ref {Fig_5})

\begin {proposition} \label {Prop6}
For any $\alpha \not=0 $ there exists $\tilde \eta (\alpha )<0$ such that if $\eta \le  \tilde \eta (\alpha )$ then stationary solutions corresponding to some negative real values $\En_n^\pm (\eta )$, $n=1,2,\ldots , N(\eta )$ for some positive integer $N(\eta )$, there exists; these couple of stationary solutions comes from a saddle point bifurcation occurring at $\eta = \eta_n (\alpha )$ for some $\eta_n (\alpha )$, where $\eta_{n+1} (\alpha ) \le \eta_n (\alpha )$ and $\eta_1 (\alpha )=\tilde \eta (\alpha )$. \ Furthermore, when $\alpha <0$ is such that $a \alpha <-1$ then there exists a stationary solution corresponding to $\En_0 (\eta )$ for any $\eta <0$ such that
\be
\lim_{\eta \to 0-0} \En_0 (\eta )= - \left [ \frac {1}{2a} [-a \alpha + W_0 (a\alpha e^{a \alpha } ) \right ]^2 \, . 
\ee
\end {proposition}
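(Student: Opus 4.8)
The plan is to collapse the seven-equation system preceding the statement into an implicit relation between two scalar unknowns, to identify the stationary states with the points of a single curve, and then to read off the bifurcation picture as a statement about the graph of $\eta$ restricted to that curve—supplementing the numerical continuation (Figure \ref{Fig_5}) with analytic input at the two ends of the diagram.

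First I would eliminate the amplitude unknowns. The three algebraic relations $C^2=-2p^2\lambda^2/\eta$, $C'^2=-2{\lambda'}^2/\eta$, $(1-2p^2)\lambda^2=-{\lambda'}^2$, together with $\lambda x_0=\KEll(p)$, express $C,C',\lambda',x_0$ through $(p,\lambda,\eta)$; in particular ${\lambda'}=\lambda\sqrt{2p^2-1}$ (forcing the constraint $p\ge 1/\sqrt2$) and $\En=\lambda^2(1-2p^2)=-{\lambda'}^2<0$. Writing $u:=\lambda a-\KEll(p)$ and $w:=\lambda'(a-x_0')$, the first matching condition gives $\mathrm{sech}\,w=\frac{p}{\sqrt{2p^2-1}}\cn(u,p)$, which fixes $w$ (hence $x_0'$); dividing the second matching condition by the first cancels $C,C'$ and leaves the single relation
\be
\lambda\sqrt{2p^2-1}\,\tanh w+\alpha=\lambda\,\frac{\sn(u,p)\dn(u,p)}{\cn(u,p)}\,,
\ee
which I call $F(p,\lambda;\alpha)=0$. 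The normalization condition then reads
\be
-\frac{2}{\eta}\Big[p^2\lambda\,G(\lambda a)+\lambda\sqrt{2p^2-1}\,(1-\tanh w)\Big]=1\,,
\ee
and I would solve this for $\eta=\eta(p,\lambda)$. Thus stationary states are in bijection with the points of the curve ${\mathcal C}_\alpha:=\{F(p,\lambda;\alpha)=0\}$ in $[1/\sqrt2,1)\times(0,\infty)$, and along ${\mathcal C}_\alpha$ the parameter $\eta$ is the explicit function above; for fixed $\eta<0$ the solutions $\En_n^\pm(\eta)$ are the points of ${\mathcal C}_\alpha$ on the level set $\{\eta(p,\lambda)=\eta\}$.

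With this reduction the fold part of the statement becomes a claim about the graph of $\eta|_{{\mathcal C}_\alpha}$. A saddle-node bifurcation at $\eta=\eta_n(\alpha)$ is exactly a turning point of this graph, i.e.\ a point of ${\mathcal C}_\alpha$ where $\eta$ attains a local maximum along the curve (equivalently, where the $2\times2$ Jacobian of the two reduced equations is singular). I would therefore trace ${\mathcal C}_\alpha$ by numerical continuation and record the successive local maxima of $\eta|_{{\mathcal C}_\alpha}$: these give the thresholds $\eta_n(\alpha)$ and the $\pm$ pairs born in each fold, while monotonicity of the enclosing arcs yields the ordering $\eta_{n+1}(\alpha)\le\eta_n(\alpha)$, with $\tilde\eta(\alpha)=\eta_1(\alpha)$ the largest such value. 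Near each turning point the fold can be confirmed analytically by checking the standard saddle-node nondegeneracy conditions (a simple zero eigenvalue of the Jacobian with nonvanishing quadratic coefficient), reducing the local count to the normal form $\eta-\eta_n\sim-(\mathrm{const})\,s^2$.

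The branch $\En_0(\eta)$ present when $a\alpha<-1$ I would treat separately and analytically, because the elliptic parametrization degenerates as $\eta\to0^-$ (one finds $p\to1$, $\KEll(p)\to\infty$ with $1-p^2\sim|\eta|$, the amplitude concentrating so that the profile stays finite). Rather than work on ${\mathcal C}_\alpha$ I would apply bifurcation-from-a-simple-eigenvalue (implicit function theorem, Crandall--Rabinowitz) directly to (\ref{Formula24}): at $\eta=0$ Proposition \ref{Prop1} supplies the genuine linear bound state with the simple, isolated negative eigenvalue $\en=\En_0(0)$, and the cubic term is a smooth perturbation, so the pair $(\psi,\En)$ continues uniquely to a branch $\En_0(\eta)$ for small $|\eta|$, giving $\lim_{\eta\to0^-}\En_0(\eta)=\en$, which is exactly the stated limit. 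The hard part, in my view, is the \emph{global} fold statement—showing analytically that the maxima of $\eta|_{{\mathcal C}_\alpha}$ occur in the claimed monotone cascade and that no further solutions appear—since this demands global control of the transcendental $F$ built from $\sn,\cn,\dn,\KEll,\EEll$; this is precisely why the result is asserted on the basis of numerical continuation, with the rigorous content confined to the local fold normal form and to the clean linear limit of the $\En_0$ branch.
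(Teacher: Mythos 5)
Your proposal is correct and, at its core, takes the same route as the paper: the paper's entire justification of Proposition \ref{Prop6} consists of deriving the seven-equation elliptic-function system in \S \ref{Ss33} and then asserting that ``the numerical study of such a system of equations gives'' the stated bifurcation picture, with Figure \ref{Fig_5} as the evidence. Your reduction of that system to a curve $F(p,\lambda;\alpha)=0$ with $\eta$ an explicit function along it is algebraically consistent with the paper's equations (the eliminations $\lambda'=\lambda\sqrt{2p^2-1}$, $\En=-{\lambda'}^2<0$, the constraint $p\ge 1/\sqrt{2}$, and the cancellation of $C,C'$ by taking the ratio of the two matching conditions are all right), and it buys something the paper does not offer: a precise meaning of ``saddle point bifurcation'' as a turning point of $\eta$ restricted to the curve, checkable by a local fold normal form. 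Your second genuine addition is the treatment of the $\En_0$ branch by implicit-function/Crandall--Rabinowitz continuation from the simple isolated negative eigenvalue of Proposition \ref{Prop1}: the paper reads the limit $\En_0(0-0)=\en$ off Figure \ref{Fig_5}, whereas your argument proves it rigorously and sidesteps the degeneration $p\to 1$ of the elliptic parametrization. Two caveats. First, $\mathrm{sech}$ is even, so the first matching condition fixes only $|w|$; both signs of $\tanh w$ (sech peak on either side of $x=a$) are admissible and must be tracked, so your $F$ is really two branches rather than one. Second, your IFT argument yields the $\En_0$ branch only for $|\eta|$ small, while the proposition asserts it for every $\eta<0$; that global claim, like the fold cascade itself, remains numerical in your write-up, which is exactly the status it has in the paper.
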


\begin {remark}\label {NotaNuova1}
For a given valu of $\eta$, e.g. $\eta = -7.4$ then two stationary states there exist with energy $\En_1^+ =-0.36$ and $\En^-_1 = - 2.29$; the corresponding values of $p$,  $\lambda$, $C$ and $x_0$ are $p_1^+=0.72$, $p_1^- =0.77$, $\lambda_1^+ = 3.39$, $\lambda_- =3.56$, $C^+=1.27$, $C^- =1.43$, $x_0^+ =0.55$ and $x_0^- =0.55$. \ If we denote
\be
I = \int_0^a |\psi (x) |^2 dx = C^2 \int_0^a \left | \cn \left ( \lambda (x-x_0),p \right ) \right |^2 dx 
\ee
then
\be
I^+ = 0.80 \ \mbox { and } \ I^- =0.98 \, .
\ee
\end {remark}

\begin{figure}
\begin{center}
\includegraphics[height=5cm,width=5cm]{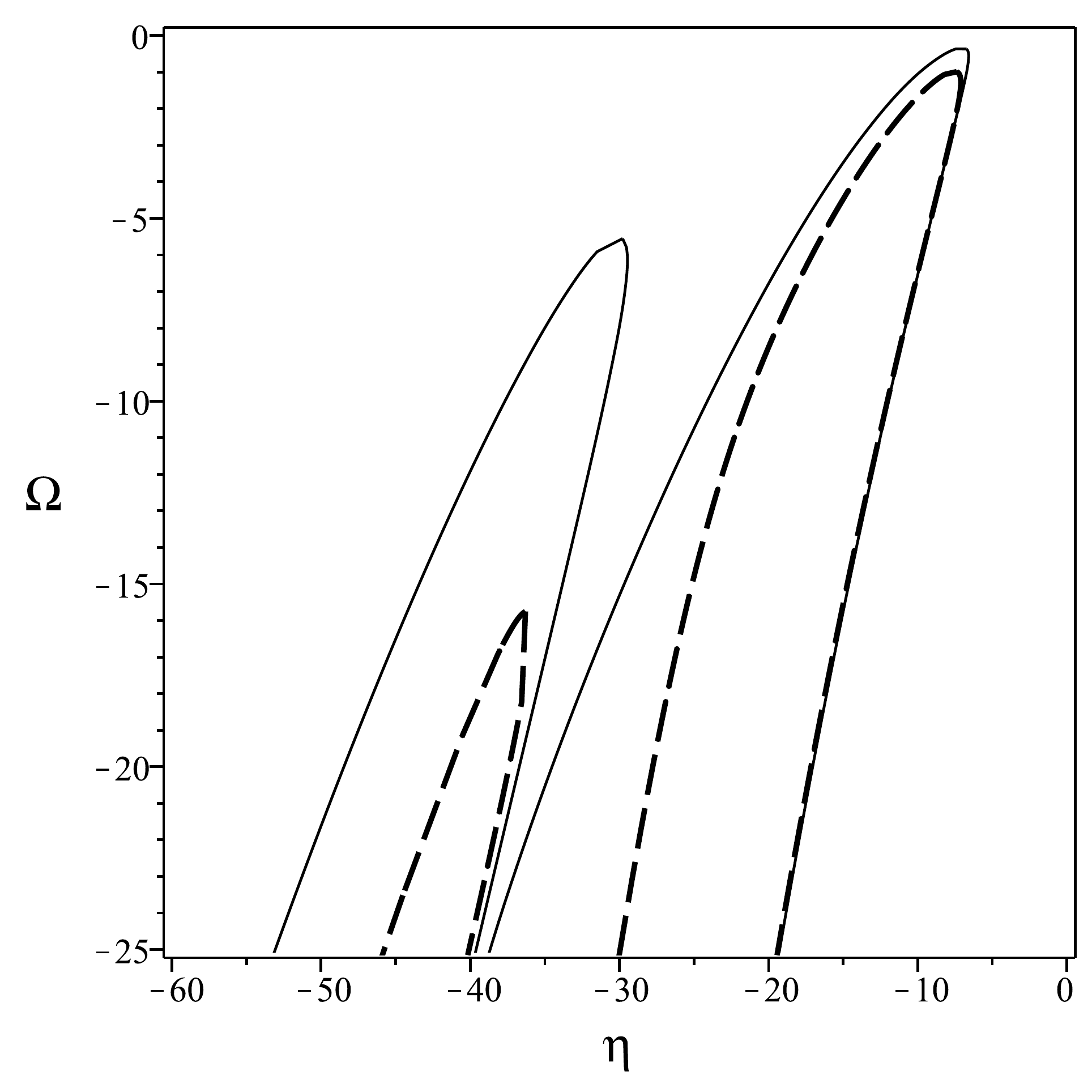}
\includegraphics[height=5cm,width=5cm]{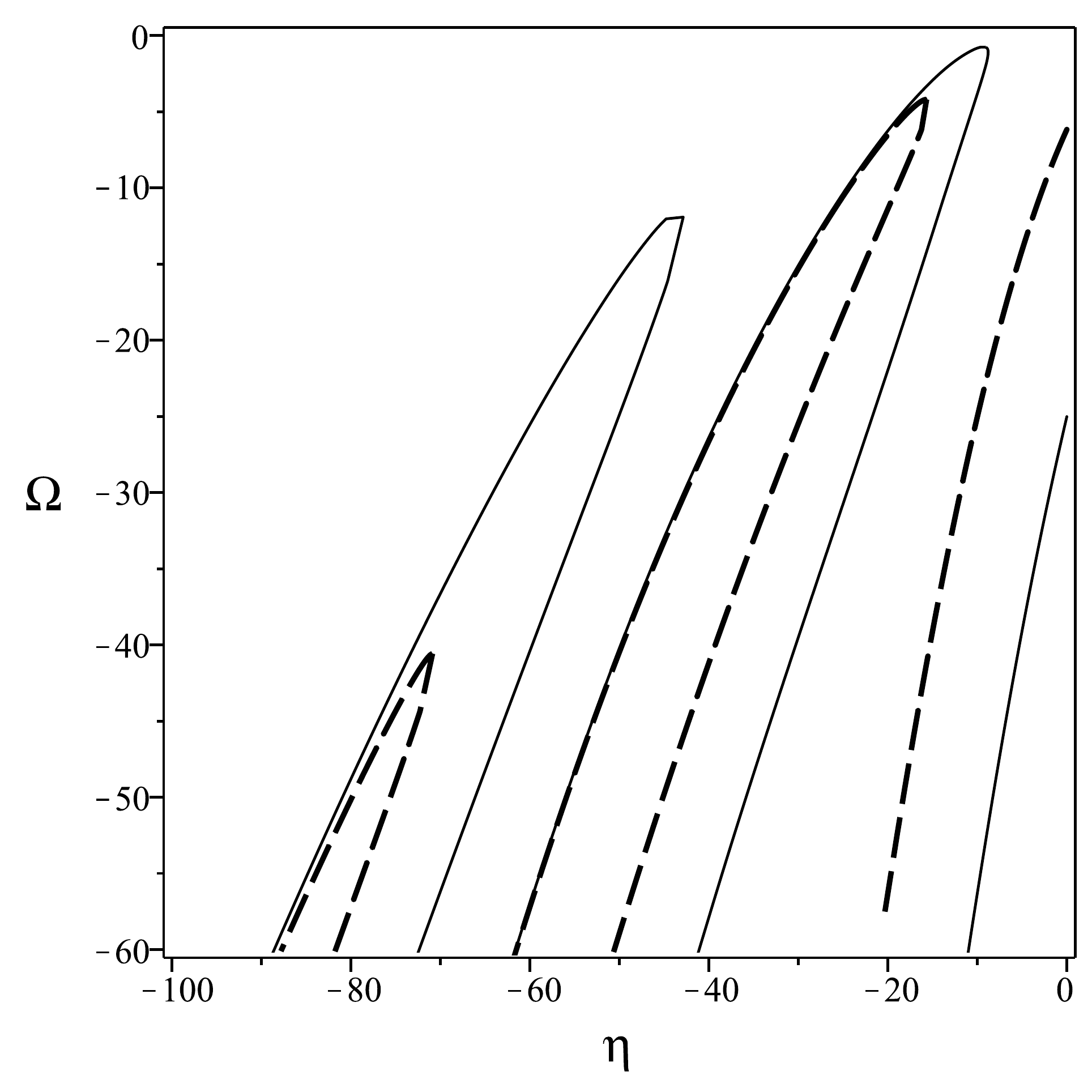}
\caption{Here we plot the energy value $\Omega$ of stationary solutions to (\ref {Formula24}) for different values of $\alpha$. \ In the left hand side picture we plot $\En_1^\pm (\eta )$, $\En_2^\pm (\eta )$ where full lines correspond to $\alpha =10$ and broken lines correspond to $\alpha=5$. \ In the right hand side picture we plot the $\En_1^\pm (\eta )$, $\En_2^\pm (\eta )$ and $\En_0 (\eta )$ where full lines correspond to $\alpha =-10$, broken lines correspond to $\alpha=-5$. \ $\En_{1,2}^\pm (\eta )$ are associated to saddle point bifurcations at $\eta =\eta_1 (\alpha )$ and $\eta = \eta_2 (\alpha )$, where $\eta_1 (\alpha )=\tilde \eta (\alpha )$ and where, for instance, $\tilde \eta (10) = -6.59$. \  $\En_0 (\eta )$ is such that $\En_0 (0-0)$ coincides with the value of the energy $\en$ given by (\ref {Formula10}).}
\label {Fig_5}
\end{center}
\end{figure}

\subsection {Quantum resonances for NLS} \label {Ss34}

Here, we critical review some definitions  proposed for quantum resonances in NLS. \ Only in this Section we denote by $\Gamma$ the strength of the nonlinear term instead of $\eta$; that is we consider the equation
\bee
H \psi + \Gamma \psi^3 = {\En} \psi  \label {Formula24Bis}
\eee
instead of (\ref {Formula24}). \ The reason of this choice will be explained in Remark \ref {R18}.

\subsubsection {On the definition of quantum resonances for NLS by the complex scaling method} \label {Ss341} This technique basically consists to the application of the  mapping $ \psi (x) \to \psi_\theta (x) = e^{i\theta /2} \psi (x e^{ i\theta } )$, $\theta \in \C$. \ This map is not unitary and applying
this transformation to the linear Schr\"odinger operator the bound states, if there, are $\theta$-independent while the continuum spectrum is “rotated” in the complex energy plane \cite {CFKS}. \ Such a rotation uncovers quantum resonances, which correspond to the poles of the analytical continuation of the kernel of the resolvent operator as discussed in \S \ref {Ss22}. \ Some attempts to extend such a definition of quantum resonances for NLS have been recently proposed \cite {MCMB,MC,SP,SW} but a serious drawback occurs: the nonlinear term $|\psi (x)|^2$ is not an analytic function! \ This fact has been properly recognized from the authors of the  papers quoted above and some expedients, e.g. substituting $|\psi |^2$ by $\psi^2$, have been proposed in order to circumnavigate this problem, but we think that the fact that analyticity property is lost cannot be fixed and furthermore the substitution of $|\psi |^2$ by $\psi^2$ changes the nonlinear Winter's model.  

\subsubsection {On the definition of quantum resonances for NLS by the Siegert's approximation method} \label {Ss342} The Siegert's approximation method, explained in Remark \ref {Nota3}, yields good results for narrow resonances for linear Schr\"odinger operators \cite {R}. \ Some authors \cite {RK} proposed to apply such a method to the study of resonances for NLS, too. \ In particular, in the case of Winter's nonlinear model they made the ansatz that the resonance wavefunction is such that $\psi (x) = C e^{ik x}$, for $x>a$, where $k$ is related to the resonance energy by the formula (in our notation) $\Omega = k^2+\eta |C|^2$. \ We should point out that this ansatz property works only for real-valued $\Omega$; unfortunately, when the imaginary part of $\Omega$ is not exactly zero then $\psi (x) = C e^{ik x}$, for $x>a$, is not a solution to the Winter's nonlinear model. \ For a for single well/barrier NLS some authors \cite {M,RWK} circumnavigate this problem modifying equation (\ref {Formula24Bis}) neglecting the nonlinear term outside the potential well in order to unambiguously define ingoing and outgoing waves and thus a scattering coefficient. \ They motivate this approach by pointing out that such an  approximation is well justified for bound states since in this case the condensate density is much higher inside the potential well than outside. \ We think that this artifact enable us to properly define the outgoing condition but it does not solve a second fatal problem; that is explicit solutions to NLS is only known when $\En$ is real-valued and where the nonlinear term is given by $\eta \psi^3$ and not by $\eta |\psi |^2 \psi$. 

\subsubsection {On the definition of quantum resonances for NLS by the analysis of scattering coefficients} \label {Ss343}  Let us define, as done in the linear case, the \emph {scattering coefficient} simply as
\be
S = \frac {C^2}{{C'}^2}
\ee
where $C$ and $C'$ are the coefficients of the total wave function in regions $0<x<a$ and $a<x$ \cite {CMBS,WMK}. \ Resonances for the Winter's nonlinear model may be defined as the relative maximum value points of $S$. \ Here, following the treatment adopted by \cite {WMK}, we compute the scattering coefficient $S(\En )$ as function of the energy $\En$ and then we plot its graph. \ To this end we have to look for solutions to (\ref {Formula24Bis})  satisfying the boundary conditions (\ref {Formula4}) and (\ref {Formula5}), furthermore we assume some ``normalization'' condition; since $\psi \notin L^2 (\R^+ )$ then we cannot consider the usual normalization condition $\| \psi \|=1$ and then we choose to put the normalization condition on the well, that is
 \be
 \| \psi \|_{L^2 ([0,a])} =1 \, . 
 \ee

\begin {remark} \label {Nota15}
We should point out that the expression of the scattering coefficient $S(\En )$ for real-valued $\En$ is obtained for real-valued $\psi$ where the nonlinear term  $\eta |\psi |^2 \psi$  in equation (\ref {Formula24Bis})is substituted by $\eta \psi^3$ and then we cannot extend the resulting expression of $S(\En )$ to complex-valued $\En$. 
\end {remark}

\begin{table}
\begin{center}
\begin{tabular}{|c|c|c|c||c|c|c|c|} \hline
\multicolumn {4}{|c||}{$\Gamma =-9$}& \multicolumn {4}{|c|}{$\Gamma =+9$} \\
\hline
$\alpha $  &  $\En_1 $   & $ \En_2 $   &  $\En_3 $ & $\alpha $  &  $\En_1 $   & $ \En_2 $   &  $\En_3 $ \\   
 \hline
$1 $  &  $ 2.95 $   & $ 26.35 $  & $70.6$ & $ 1$  &  $17.64$  & $43.0$  & $88.0$  \\ \hline
$5 $  &  $ 3.73 $   & $ 28.3 $  & $72.4 $ & $ 5$  &  $19.18$  & $45.4$  & $89.8$  \\ \hline
$10 $  &  $ 4.51 $   & $ 29.86 $  & $74.8$ & $ 10$  &  $20.21$  & $47.2$  & $92.2$  \\ \hline
\end{tabular}
\caption{The scattering coefficient $S(\En )$ assumes relative maximum values at the points $\En_n$, $n \ge 1$. \ In the table we collect these values for $n=1,2$ and $3$, for some values of $\alpha$ and in both focusing and de-focusing nonlinearity cases.}
\label{Tavola3}
\end{center}
\end {table}

Now, we are ready to compute the scattering coefficient; for argument's sake we restrict ourselves to the focusing case where $\Gamma <0$; the de-focusing case where $\Gamma >0$ can be similarly treated and thus we don't dwell here on the details.

\begin{figure}
\begin{center}
\includegraphics[height=5cm,width=5cm]{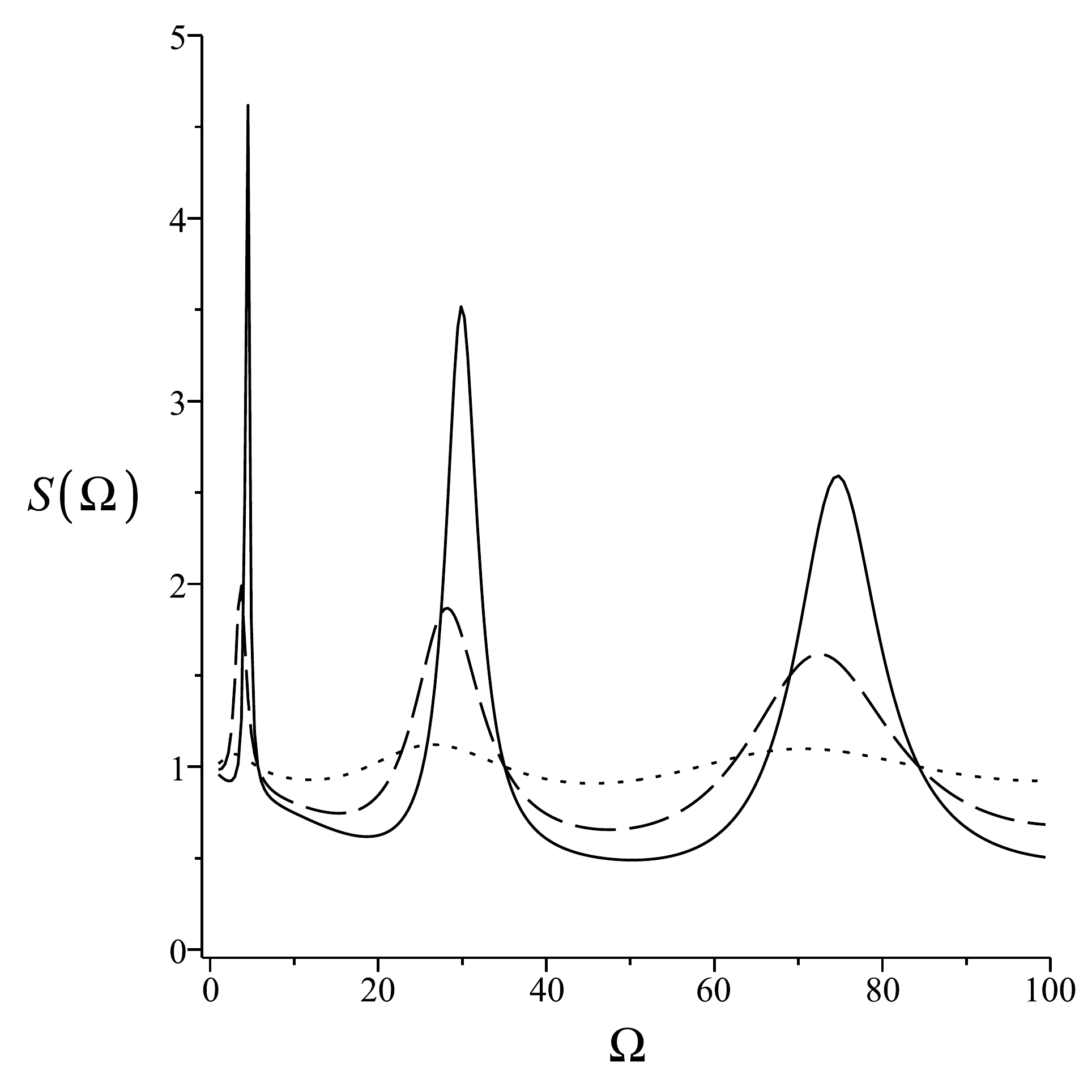}
\includegraphics[height=5cm,width=5cm]{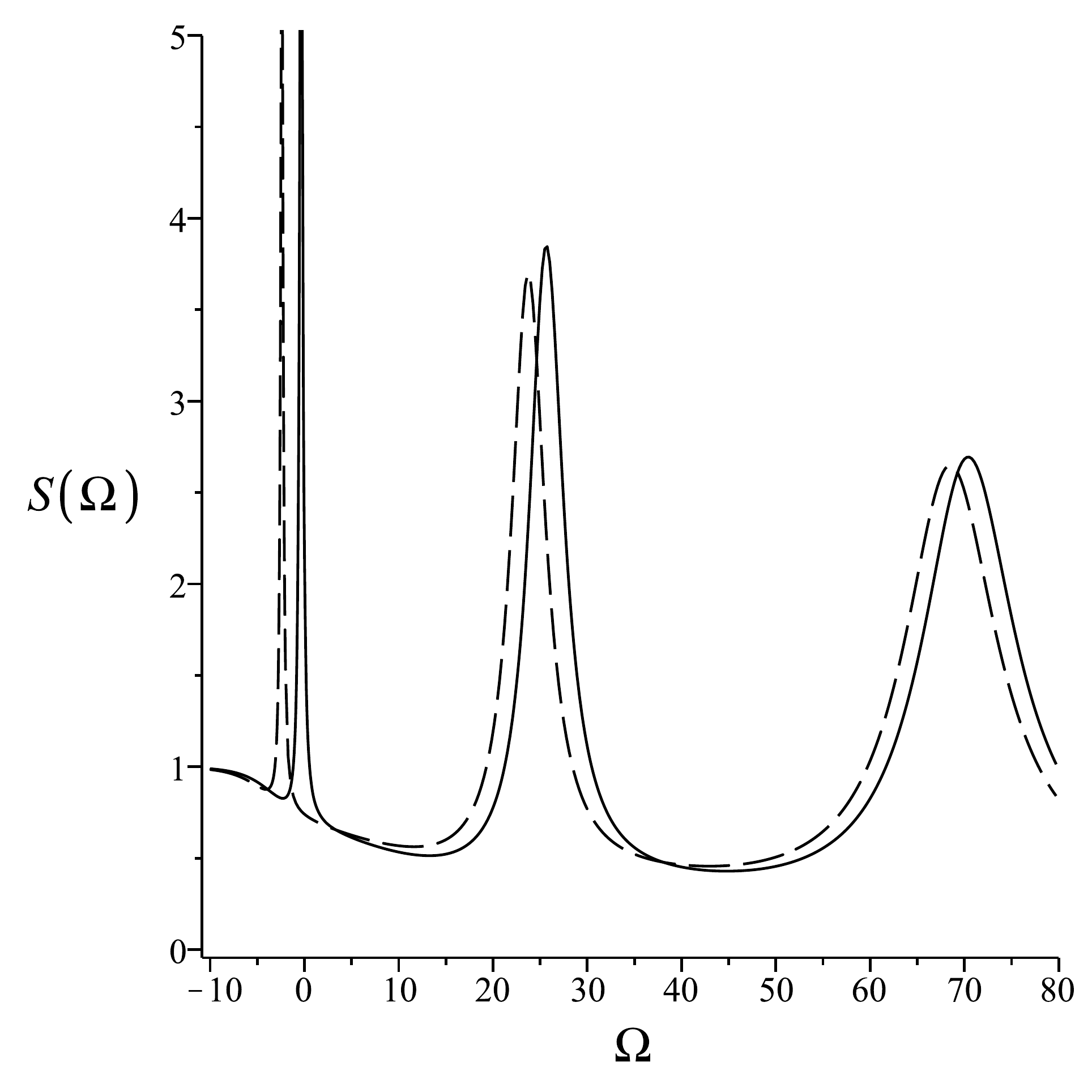}
\caption{ In the left hand side picture we plot the scattering coefficient $S(\En )$ for different values of $\alpha$ and $\Gamma =-9$: $\alpha =1$ (dot line), $\alpha =5$ (broken line) and $\alpha =10 $ (full line). \ In the right hand side picture  we plot the scattering coefficient $S(\En )$ for $\Gamma =-5.95$ (full line) and $\Gamma =-7.27$ (broken line); in both cases $\alpha =+10$.}
\label {Fig_6}
\end{center}
\end{figure}

Let $\Gamma <0$, let ${\En}$ be fixed and let
\be
\psi (x) = 
\left \{
\begin {array}{ll}
C \mbox { cn} \left ( \lambda (x-x_0 ) , p \right ) & ,\ x \in (0,a) \ \mbox { where } p^2 = \frac {\lambda^2-{\En}}{2\lambda^2}  \ \mbox { and } \ C^2 = - \frac {\lambda^2-{\En}}{\Gamma}  \\ 
C' \mbox { cn} \left ( \lambda (x-x_0' ) , p' \right ) & ,\ x >a \ \mbox { where } {p'}^2 = \frac {{\lambda'}^2-{\En}}{2{\lambda'}^2}  \ \mbox { and } \ {C'}^2 = - \frac {{\lambda'}^2-{\En}}{\Gamma} 
\end {array}
\right. \, .
\ee
be a real-valued solution to the nonlinear equation in (\ref {Formula24Bis}) ($\lambda >0$ and $\lambda ' >0$). \ If $C' \not= 0$ or $p' \not= 1$ then $\psi \notin L^2$. \ The Dirichlet boundary condition (\ref {Formula4}) at $x=0$ and the  matching condition (\ref {Formula5}) at $x=a$ imply that $\psi (0)=0$ and 
\be
\left \{
\begin {array}{l}
C' \mbox { cn} \left ( \lambda' (a-x_0' ) , p' \right ) - C \mbox { cn} \left ( \lambda (a-x_0 ) , p \right ) =0 \\
-C' \lambda' \mbox {sn}(\lambda' (a-x_0'),p') \mbox {dn}(\lambda' (a-x_0'),p')+ C \lambda \mbox {sn}(\lambda (a-x_0),p) \mbox {dn}(\lambda (a-x_0),p)  = \alpha C \mbox { cn} \left ( \lambda (a-x_0 ) , p \right )
\end {array}
\right.
\ee
In conclusion, the parameter $C$, $C'$, $\lambda$, $\lambda'$, $x_0$, $x_0'$, $p$ and $p'$ must satisfy to following conditions
\be
\left \{
\begin {array}{l}
\lambda x_0 ={\mathcal K}(p) \\
C' \mbox { cn} \left ( \lambda' (a-x_0' ) , p' \right ) - C \mbox { cn} \left ( \lambda (a-x_0 ) , p \right ) =0 \\
-C' \lambda' \mbox {sn}(\lambda' (a-x_0'),p') \mbox {dn}(\lambda' (a-x_0'),p')+ C \lambda \mbox {sn}(\lambda (a-x_0),p) \mbox {dn}(\lambda (a-x_0),p)  = \alpha C \mbox { cn} \left ( \lambda (a-x_0 ) , p \right ) \\
1
=  -\frac {2p^2 \lambda}{\Gamma } \int_0^{\lambda a} \mbox { cn}^2 \left ( y - {\mathcal K}(p), p \right ) dy\\
C^2 =  -\frac {2}{\Gamma} p^2 \lambda^2 \\
 {C'}^2 =  -\frac {2{\lambda'}^2{p'}^2}{\Gamma} \\
 {\En}=(1-2p^2) \lambda^2 = (1-2{p'}^2) {\lambda'}^2
\end {array}
\right.
\ee
From these equations one can plot in Figure \ref {Fig_6} the scattering coefficient $S (\En )$.  \ In Table \ref {Tavola3} the value of the energy $\Omega$ corresponding to the first three "resonances", identified with the maximum point of the scattering coefficient, is reported for, e.g., $\Gamma =- 9$ and $\Gamma =+9$.

\begin {remark} \label {Nota16}
Figure \ref {Fig_6} may suggest that, as in the linear case, complex poles for $S(\Omega )$ there exists and thus a typical exponentially decay time-behavior associated to quantum resonances would occur.
\end {remark}

\begin {remark} \label {R18}
From Figure \ref {Fig_6} it turns out that when the nonlinearity strength $\Gamma$ changes then the resonances energies shift and, in some cases, become narrow and narrow. \ In particular we expect that they eventually become the stationary states $\Omega_n^\pm$ obtained in Proposition \ref {Prop6} when $\Gamma$ takes some values. In order to compare the values of $\eta$ and $\Gamma$ for which stationary states occur we have to point out that we are dealing with two different normalization conditions; if we denote by $\phi (x)$ the stationary solution to equation (\ref {Formula24}) corresponding to a given $\eta <0$ and with the normalization condition 
\be
\int_0^{+\infty} | \phi (x) |^2 dx = 1 
\ee
then
\be
\psi (x) = \frac {1}{\sqrt {I}} \phi (x) \, ,
\ee
is a stationary solution to (\ref {Formula24Bis}) for $\Gamma = \eta I$, where $I$ is defined in Remark \ref {NotaNuova1}. \ Then, from Remark \ref {NotaNuova1} at $\Gamma = -7.27$, corresponding to $\eta =-7.4$, we expect to see a stationary state with energy $\En =\En_1^+ = -0.36$, and at $\Gamma =5.95$, still corresponding to the same value $\eta =-7.4$, we expect to see a stationary state with energy $\En =\En_1^- =-2.29$ (see the right hand side of  Figure \ref {Fig_6}).
\end {remark}

\subsection {Survival amplitude for NLS} \label {Ss35}

Let $\psi_t (x)$ be the solution to (\ref {Formula1}) with initial condition $\psi_0 (x)$; the  \emph {survival amplitude} is the scalar product between these two vectors, that is
\bee
\Am (t) := \langle \psi_0 , \psi_t \rangle \, . \label {Formula30}
\eee
In order to numerically compute $\psi_t$ we make use  of the spectral splitting method discussed below. \ In particular, we perform the following numerical experiments where $\psi_0 $ is given by the wavefunction to the linear Schr\"odinger operator $H_\infty$, i.e. $\psi_0 (x)= \psi_{\infty , 1} (x)$.

\subsubsection {Spectral splitting method} \label {Ss351} In order to compute the solution $\psi_t$ here we make use of the spectral splitting approximation method. \ The basic idea is
quite simple (see, e.g., the paper \cite {B} and, in particular, the paper \cite {Sacchetti1} where the spectral splitting method has been adapted to the case of singular potentials): suppose to consider a formal evolution equation 
\be
i \dot \psi_t =[A+B (\psi_t) ] \psi_t \, ,\psi_0 = \left. \psi_t \right |_{t=t_0}\, ,
\ee
where $A$ and $B(\psi_t )$ are two given operators, where the second one is not linear and it depends on $\psi_t$. \  Let us denote by $S^{t - t_0} \psi_0$ its solution where $S^{t - t_0}$ is the associated evolution
operator; let us denote by $X^{t - t_0}$ and $Y^{t - t_0}$ the evolution operators respectively associated to the equations
\be
i \dot \psi_t =A \psi_t \ \mbox { and } \ i \dot \psi_t =B(\psi_t) \psi_t \, . 
\ee
It is well known that, in general, $S^\delta \psi_0 - X^\delta Y^\delta \psi_0 \not= 0$, for any $\delta \in \R$, 
but this difference may be proved, under some circumstances, to be small when $\delta $ is small. \ That is for any fixed $t>0$ and any $n \in \N$ large enough we have that
\bee
\psi_t = S^{t-t_0} \psi_0 \sim (X^\delta Y^\delta)^n \psi_0 \, , \ \mbox { where } \ \delta =(t-t_0)/n \, . \label {Formula31}
\eee
We apply now such an approximation method to the NLS  (\ref {Formula1}) where $A$ is the linear operator $H$ and where $B (\psi_t )$ is the nonlinearity term $\eta |\psi_t |^2$. \ Hence $X^\delta = e^{-i H \delta }$ is the evolution operator discussed in \S \ref {Ss23}; concerning $Y^\delta$ we have to solve equation
\bee
i \dot w_t = \eta |w_t |^2 w_t \, , \ \mbox { where } \ w_{t_0}=w_0 \, ,  \label {Formula32}
\eee
and a straightforward calculation gives that $|w_t|^2$ is constant with respect to $t$; thus (\ref {Formula32}) can be written
\be
i \dot w_t = B(w_t) w_t = B({w_{t_0}}) w_t 
\ee
and then $Y^\delta$ is nothing but the simple multiplication operator
\be
Y^\delta w_0 = e^{-i \eta |w_0 |^2 \delta } w_0\, .
\ee
In conclusion, by means the spectral splitting method we have that (let us fix $t_0=0$ for argument's sake) 
\bee
\Am (t) = \langle \psi_0 , \psi_t \rangle  = \langle \psi_0 , S^t \psi_0 \rangle  \sim \langle \psi_0 , (X^\delta Y^\delta )^n \psi_0 \rangle \label {Formula33}
\eee
for $n$ large enough where $\delta =t/n$ and where 
\be
X^\delta \psi = e^{-iH\delta }\psi \ \mbox { and } \ Y^\delta \psi = e^{-i\eta |\psi |^2}\psi \, .
\ee

\subsubsection {Numerical experiments} In this Section we compute the survival amplitude (\ref {Formula30}) where $\psi_0$ is given by (\ref {Formula11}); for argument's sake we fix $a=1$ and $\alpha =+10$ and we compare five different numerical experiments where $\eta=0$ (i.e. the linear evolution), $\eta = \pm 3$ and $\eta =\pm 9$. \ In Figure \ref {Fig_7}, left-hand side picture, we plot the absolute value of the survival amplitude $| \Am (t) |$ for $t$ in a given interval (for argument's sake let $t \in [0,1]$).
\begin{figure}
\begin{center}
\includegraphics[height=5cm,width=8cm]{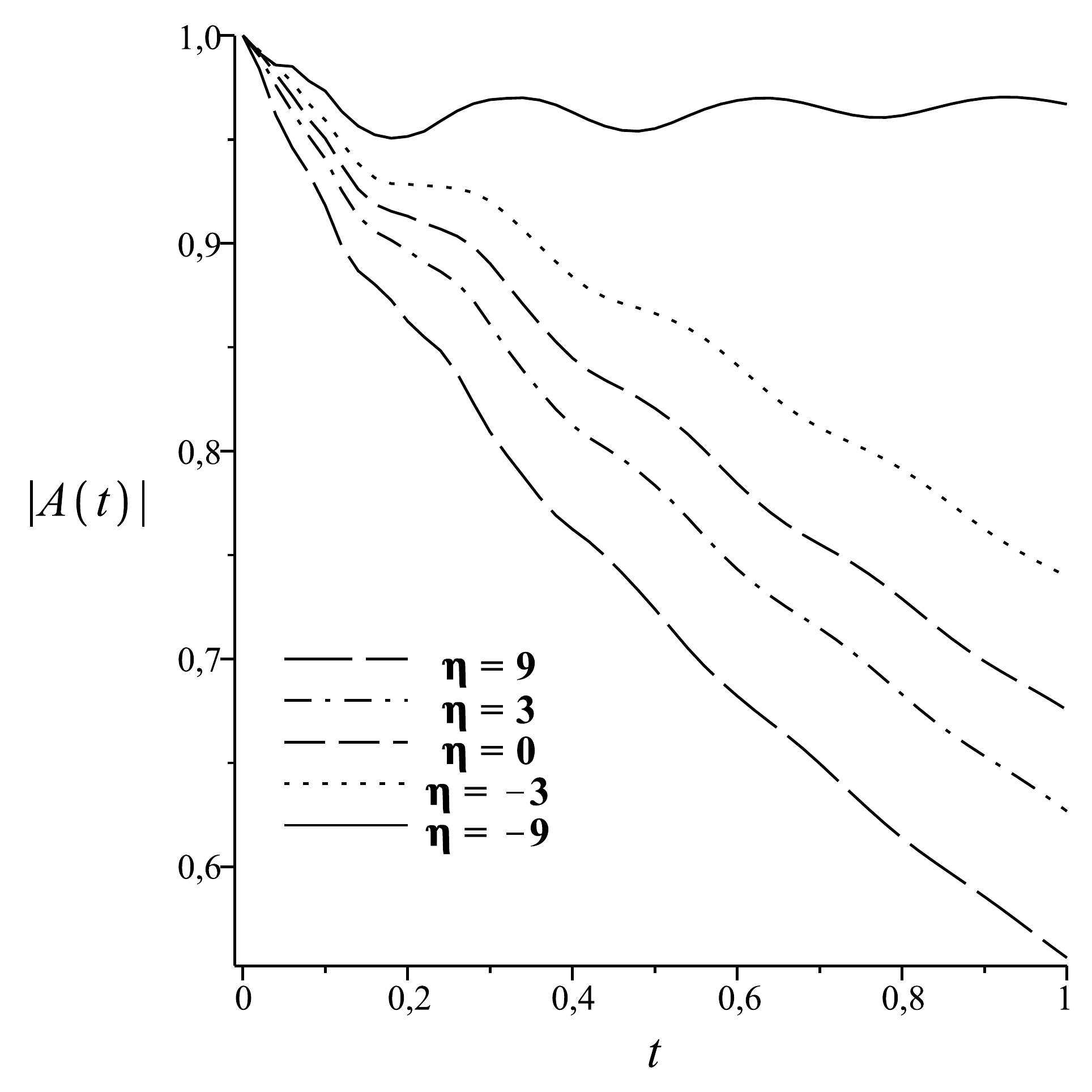}
\caption{Here we plot the absolutely value of the survival amplitude $|\Am (t)|$ where the initial wavefunctions $\psi_0$ is given by (\ref {Formula11}). \ In particular, $a=1$ and $\alpha =+10$ are fixed.}
\label {Fig_7}
\end{center}
\end{figure}

\begin {remark} \label {Nota17}
From Figure \ref {Fig_7} it turns out that a rather fast exponential    decay occurs when $\eta \ge 0$, while for $\eta <0$, such a time-decay effect seems to disappear. \ This fact may be explained by the fact that the quantum resonances associated to maximum point of the scattering coefficient $S(\Omega )$ have energies that increase when $\eta >0$ increases; in contrast, when $\eta <0$ then the resonance energy decreases and the resonances become narrow and narrow and eventually they become stationary state for $\eta \le \tilde \eta$.
\end {remark}

\section {Do quantum resonances have significance in nonlinear Schr\"odinger equations? Comments and conclusions} \label {Sec4}

A key feature of quantum resonances for the study of the exponentially decay of the survival amplitude for linear Schr\"odinger equations is that the exponent in formula (\ref {Formula19}) does not depend on the initial state but only on the imaginary part of the complex-valued energy of the quantum resonance. \ We'll see that this property still seems to be valid for NLS by means of the following numerical experiment in which we consider different initial conditions $\psi_{i,0}(x)$, $i=A,G,S$, and compare the exponential decays in the absence and presence of nonlinear terms in Figure \ref {Fig_8}.

\subsection {Numerical experiment}

Consider the case in which the following different initial conditions are assigned:

\begin {itemize}

\item [-] the initial wavefunction is the eigenfunction  (\ref {Formula11}) in the case where we put the Dirichlet condition in $x=a$ in the linear case:
\be
\psi_{A,0} (x) = \sqrt {\frac {2}{a}} \sin \left ( \frac {\pi x}{a} \right ) \chi_{(0,a)} (x) \, . 
\ee

\item [-] the initial wavefunction has a Gaussian shape centered at $x= \frac 12 a$:
\be
\psi_{G,0} (x) = C \left [ e^{-(x-a/2)^2/\sigma^2 } - e^{-a^2/4\sigma^2 } \right ] \chi_{(0,a)} (x)
\ee
where $\sigma >0$ and 
\be
C = \sqrt {2} \left [ \sqrt {2\pi} \sigma \mbox {erf} \left ( \frac {a}{\sqrt {2} \sigma }\right ) - 4 \sqrt {\pi} \sigma e^{-a^2/4\sigma^2} \mbox {erf} \left ( \frac {a}{ {2} \sigma }\right ) + 2 a  e^{-a^2/2\sigma^2} \right ]^{-1/2} 
\ee
is a normalization parameter; for argument's sake let $\sigma = 0.2$, $a=1$ and thus  $C=2.0028$.

\item [-] the initial wavefunction is a piece-wise constant function:
\be 
\psi_{S,0} (x) =  \sqrt {\frac {2}{a}} \chi_{(a/4,3a/4)} (x)\, .
\ee

\end {itemize}

We next compute the time evolution $\psi_{i,t}$, $i=A,G,S$, for different values of $\eta$ by means of the spectral splitting method and go on to plot the absolute value of the survival amplitude
\be
{\mathcal A}_i (t) = \langle \psi_{i,t}, \psi_{i,0} \rangle \, . 
\ee
In particular, we see that, after a transient time, the term $|{\mathcal A}_i (t) |$ has an exponential    behavior of the kind $b_i e^{-a_i t}$, with $a_i=a_i(\eta ) >0$ real-valued, and we estimate the values of these parameters in Table \ref {Tavola5}.

\subsection {Comparison of numerical experiments}
It appears from Figure \ref {Fig_8} that for small $t$ an oscillating behavior, observed since Winter's article \cite {Wi}, of $|{\mathcal A}_i (t)|$ can occur; this oscillating behavior is clearly explained in the linear case where $\eta =0$ by the fact that in Theorem \ref {Teo2} several resonances can contribute to the wave function $\psi_t$ for small times and thus a typical interference effect may occur. \ After a transient time only the contribution due to the narrow resonances survives and thus the oscillatory behavior disappears and the contribution of the resonances to $|{\mathcal A}_i (t)|$ has the simple form $b_i e^{-a_i t}$. \ Similar oscillatory behavior is also observed for nonlinear models. \ Furthermore, for the nonlinear models we see that the pictures corresponding to $\eta>0$ and $\eta <0$ are quite different in all three cases $i=A,G,S$. \ When $\eta >0$ the dispersion effect seems to be stronger than in the linear case where $\eta =0$, while the dispersion effect seems to gradually vanishes when $\eta$ is negative and, in absolute value, quite large. \ In particular, when $\eta$ reaches the threshold value $\tilde \eta(\alpha )=-6.59$ for $\alpha =10$ the decay effect disappears. \ We also point out that the coefficients $a_i (\eta )$ depend on $\eta =0, \pm 3, \pm 9$ but are essentially the same for any $i=A,G,S$.
\begin{figure}
\begin{center}
\includegraphics[height=4cm,width=4cm]{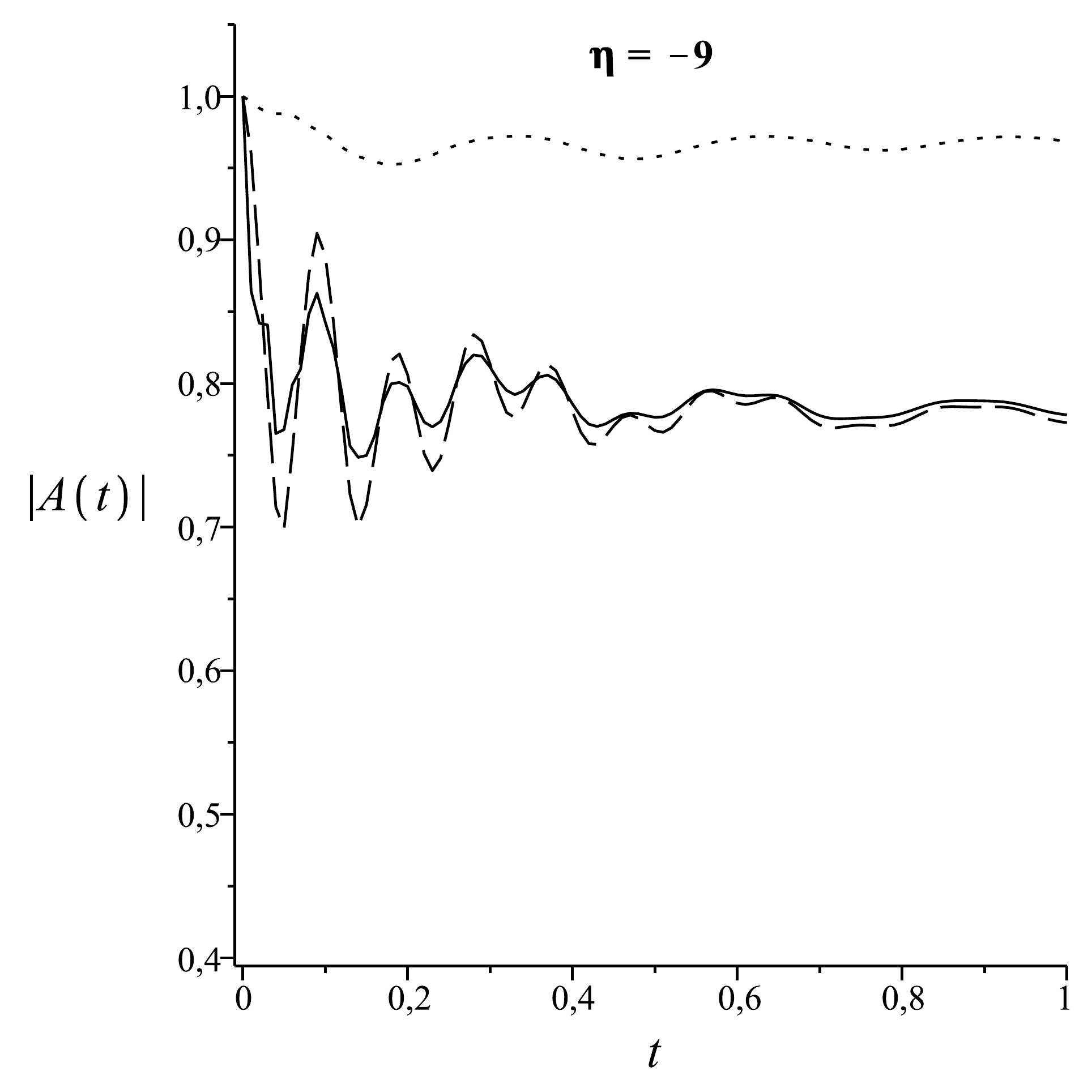}
\includegraphics[height=4cm,width=4cm]{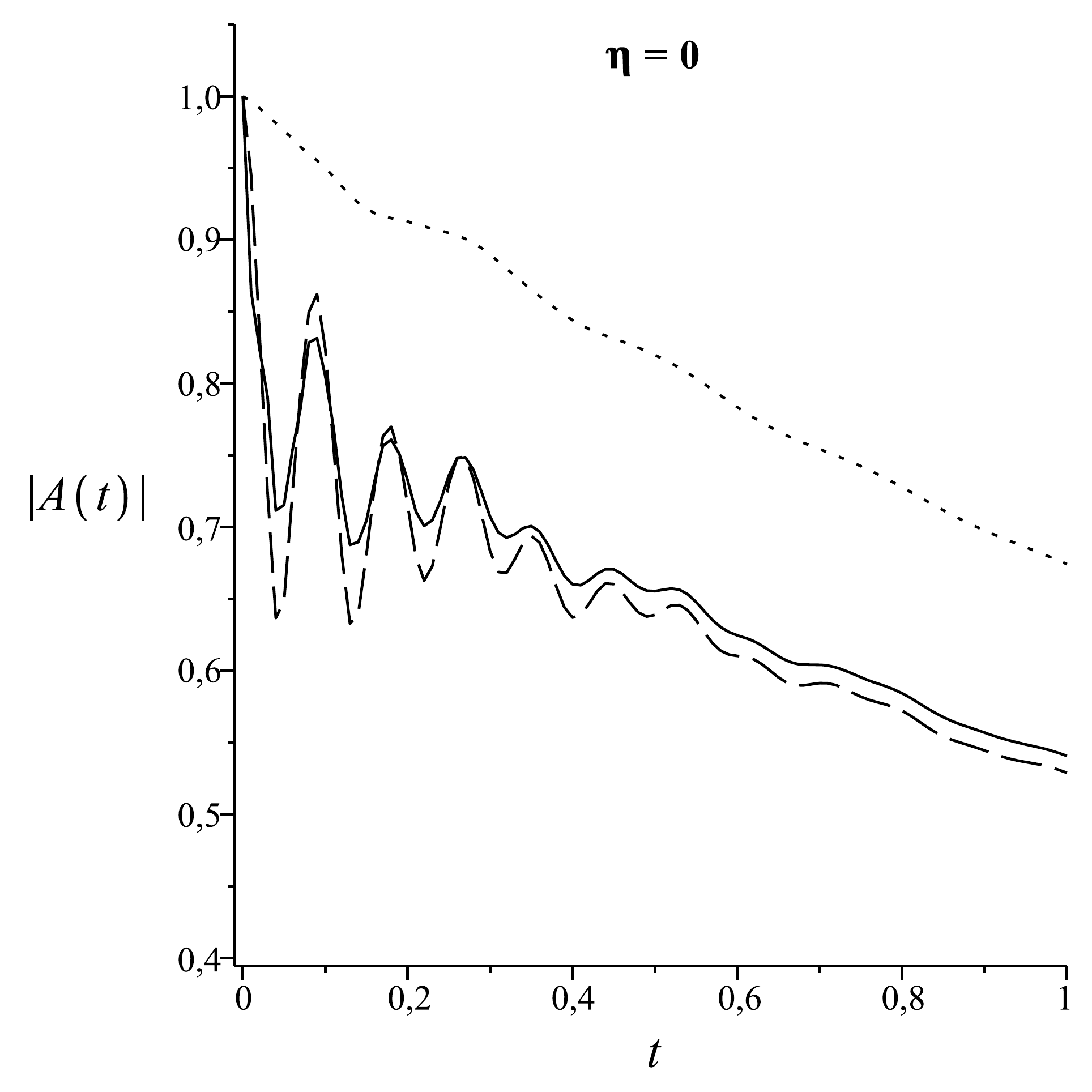}
\includegraphics[height=4cm,width=4cm]{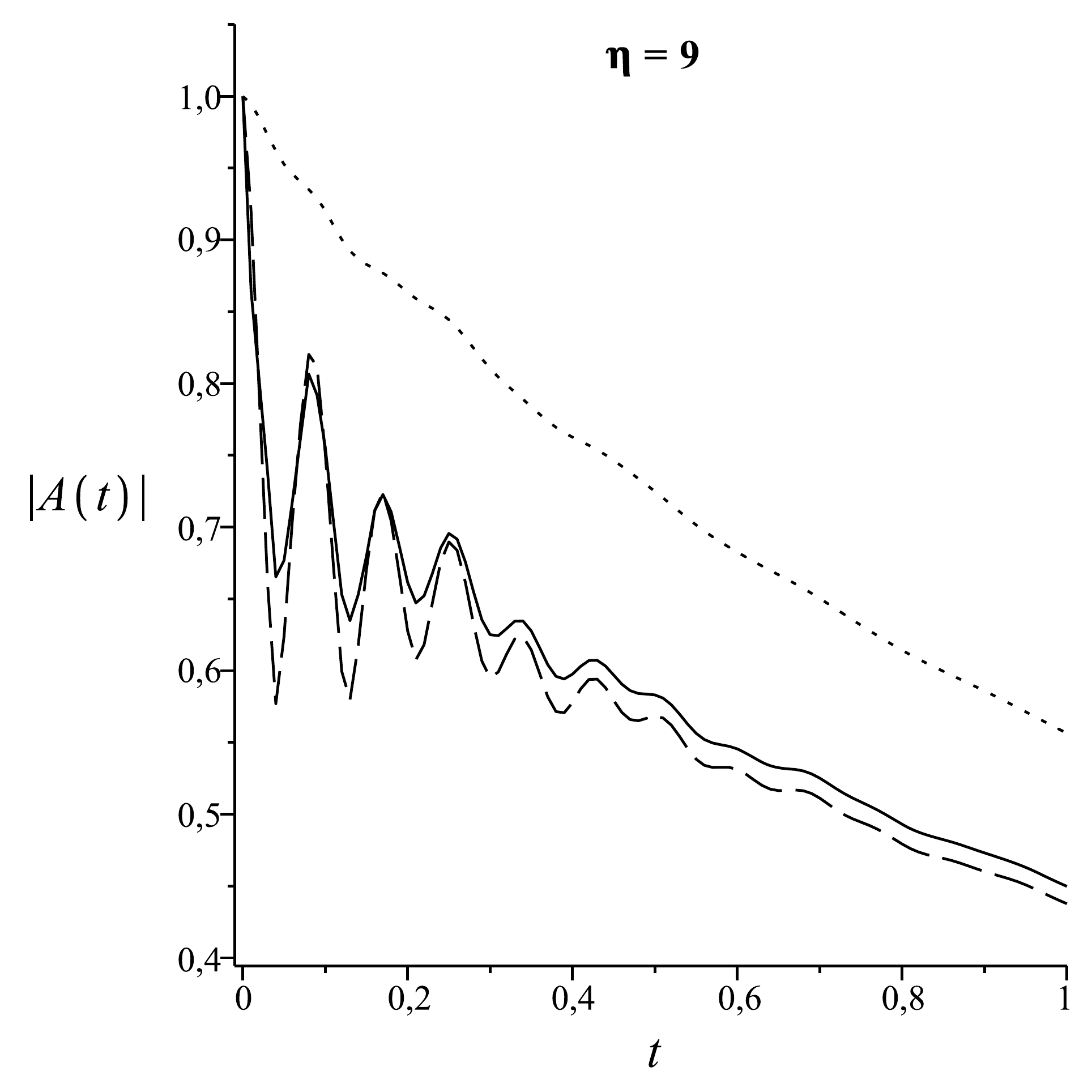}
\caption{Here we plot the absolutely value of the survival amplitude for different values of $\eta$ (i.e. for $\eta =0$, linear case, and for $\eta =\pm 9$). \ The initial wavefunction is $\psi_{A,0}$ (dot line), $\psi_{G,0}$ (broken line) and $\psi_{S,0}$ (full line). \ In particular, $a=1$ and $\alpha =+10$ are fixed.}
\label {Fig_8}
\end{center}
\end{figure}

\begin{table}
\begin{center}
\begin{tabular}{|l||c|c|c|c|c|} \hline
 & $\eta =-9$  &  $\eta =-3 $ & $\eta =0$   & $ \eta =+3 $   &  $\eta =+9 $ \\  \hline
 $a_A$ & $0.0054$ & $0.2877$ & $0.3879$ & $0.4621$  & $0.5770$  \\ \hline
 $b_A$ & $0.9704$ & $0.9927$ & $0.9909$ & $0.9855$  & $0.9723$  \\ \hline
  $a_G$ & $0.0513$ & $0.3403$ & $0.4172$ & $0.4757$  & $0.5659$  \\ \hline
 $b_G$ & $0.8078$ & $0.8022$ & $0.7893$ & $0.7758$  & $0.7501$  \\ \hline
 $a_S$ & $0.0482$ & $0.3314$ & $0.4092$ & $0.4695$  & $0.5634$  \\ \hline
 $b_S$ & $0.8111$ & $0.8132$ & $0.8033$ & $0.7923$  & $0.7708$  \\ \hline
\end{tabular}
\caption{Fit the absolute value of the survival amplitude function $\Am (t)$ with an exponential function of the kind $b_i e^{-a_i t}$;  the values of $a_i (\eta)$ and $b_i (\eta)$, where $i=A,G,S$, depends on $\eta$ and they are listed above.}
\label{Tavola5}
\end{center}
\end {table}

\subsection {Conclusions}
From the numerical analysis performed we can conclude that an extension to the nonlinear case of the notion of quantum resonance seems admissible and that a possible definition of it as a pole of the scattering coefficient $S(\En )$ deserves to be proposed. \ Unfortunately, it is not possible to verify this definition analytically on a simple model because the meromorphic extension of $S(\En )$ to complex values of $\En$ is not explicitly known in the case where the nonlinear term $\eta |\psi_t |^2 \psi_t$ contains the absolute value. \ Only numerical experiments are, at this moment, possible. \ The numerical analysis also confirm the intuitive fact that in the case of a repulsive nonlinear potential (i.e., with $\eta >0$) the nonlinear Schr\"odinger equation exhibits a more dispersive character and thus the survival amplitude decreases more rapidly over time than in the linear case. \ On the contrary, in the case of an attractive nonlinear potential (i.e., with $\eta <0$), the opposite effect is observed; that is the survival amplitude decreases more slowly until it becomes essentially stable when $\eta$ reaches a critical value. \ It is evident from the graphs that this critical value occurs in conjunction with the presence of stationary solutions. \ Thus, we can conjecture that in the nonlinear case we are in the presence of quantum resonances that similarly affect the decay of the survival amplitude and that the associated energy has a negative imaginary part that tends to increase (in absolute value) as positive $\eta$ increases and that, on the other hand, tends to zero when negative $\eta$ approaches the threshold value; at this limit value the resonance becomes a stationary state.

\appendix 

\section {Proof of the Theorem \ref {Teo1}} \label {App1}The proof is  essentially based on a long and straightforward calculation. \ We collect here the mains steps.

Since
\be
f_n^1 = \frac 1{2 \sqrt {t}} (2an+|x|+|y|) + i\alpha \sqrt {t} /2 \ \mbox { and } \
e_n^1 = e^{i(an+(|x|+|y|)/2)^2/t} \, , 
\ee
then one can check that 
\bee
U_1 (x,y,t)=  -\frac {1}{\sqrt {8\pi}} \sum_{n=0}^\infty \alpha^n   g_n^1 \left ( \frac {it}{2} \right )^{(n-1)/2} e^{-i(f_n^1)^2/2}D_{-n} ((1-i)f_n^1) \, . \label {Formula34}
\eee
And similarly 
\bee
U_2 (x,y,t) &=& \frac {1}{\sqrt {8\pi} } \sum_{n=0}^\infty  \alpha^{n+1} 
(it/2)^{n/2} e^{-i(f_n^2)^2/2}  g_n^2 D_{-n-1} ((1-i)f_n^2) \label {Formula35} \\ 
U_3 (x,y,t) &=& \frac {1}{\sqrt {8\pi} } \sum_{n=0}^\infty  \alpha^{n+1} 
(it/2)^{n/2} e^{-i(f_n^3)^2/2}  g_n^3 D_{-n-1} ((1-i)f_n^3) \label {Formula36} \\
U_4 (x,y,t) &=& - \frac {1}{\sqrt {8\pi} } \sum_{n=0}^\infty  \alpha^{n+1} 
(it/2)^{n/2} e^{-i(f_n^4)^2/2}  g_n^4 D_{-n-1} ((1-i)f_n^4)\, .  \label {Formula37}
\eee

Indeed, for instance, let $U_1 = \frac {i}{4\pi } A_1$ where
\be
A_1 &=& 
 \int_{\R +i0} \frac {1}{k} e^{-ik^2 t} K_1 (x,y,k ) dk = \int_{\R +i0} \frac {1}{k} e^{-ik^2 t}\left [ \Gamma (k) \right ]^{-1}_{1,1} e^{ik (|x|+|y|)} dk \\ 
&=& 2i \sum_{n=0}^\infty  \int_{\R }  e^{-ik^2 t}\left ( \frac {-\alpha e^{2ika}}{2ik-\alpha}\right )^n  e^{ik (|x|+|y|)} dk \\ 
&=& 2i \sum_{n=0}^\infty  (-\alpha)^n A_{1,n} \, ,\ A_{1,n} := \int_{\R }  e^{-ik^2 t+2ikan+ ik (|x|+|y|) } \frac {1}{(2ik-\alpha)^n} dk
\ee

If we set
\be
-k^2 t+2kan+ k (|x|+|y|) 
= - r^2  +(2an+|x|+|y|)^2/4t
\ee
where
\be
r= \sqrt {t} \left (  k - (2an+|x|+|y|)/2t \right ) \, 
\ee
then
\be
A_{1,n} 
= (2i)^{-n} t^{(n-1)/2} e^{i(an+(|x|+|y|)/2)^2/t}   \int_{\R }  e^{-ir^2 } \frac {1}{(r+f_n^1)^n} dr \, . 
\ee
We recall now the following result (see Appendix A \cite {KS}): let $m\in \N$ and $z \in \C$ such that $\Im z >0$, then
\be
\int_{\R} \frac {e^{-ix^2}}{(x+z)^n} dx = -i (-2i)^{(n-1)/2}\sqrt {2\pi} e^{-iz^2/2}D_{-n} ((1-i)z)
\ee
where $D_{-n} (z)$ is the parabolic cylinder function.
Hence,
\be
A_{1,n}
=  - \left (- \frac {t}{2i} \right )^{(n-1)/2} \sqrt {\pi/2} e^{-i(f_n^1)^2/2} g_n^1 D_{-n} ((1-i)f_n^1)  \cdot (-1)^{n+1}  \,  .
\ee
and then (\ref {Formula34}) follows. 

Similarly, we can obtain an expression for $U_j$, $j=2,3,4$, by means of convergent series. \ E.g., let $U_2 = \frac {i}{4\pi }  A_2$ where
\be
 A_2 &=& 
 \int_{\R +i0} \frac {1}{k} e^{-ik^2 t} K_2 (x,y,k ) dk = \int_{\R +i0} \frac {1}{k} e^{-ik^2 t}\left [ \Gamma (k) \right ]^{-1}_{1,2} e^{ik (|x|+|y-a|)} dk \\ 
&=&  2i \alpha \int_{\R +i0}\frac {1}{2ik-\alpha}  \frac {1}{1+ \frac {\alpha e^{2ika}}{2ik-\alpha}}  e^{-ik^2 t}  e^{ik (|x|+|y-a|+a)} dk 
\\ 
&=& 2i \alpha \sum_{n=0}^\infty  \int_{\R }  \frac {1}{2ik-\alpha} \left ( \frac {-\alpha e^{2ika}}{2ik-\alpha}\right )^n   e^{-ik^2 t} e^{ik (|x|+|y-a|+a)} dk \\ 
&=& 2i\alpha \sum_{n=0}^\infty  (-\alpha)^n A_{2,n} \, ,
\ee
where
\be
A_{2,n} &:=& \int_{\R }  e^{-ik^2 t+2ikan+ ik (|x|+|y-a|+a) } \frac {1}{(2ik-\alpha)^{n+1}} dk
\ee

If we set
\be
-k^2 t+2ka n + k (|x|+|y-a|+a) 
= - r^2  +(2an+|x|+|y-a|+a)^2/4t
\ee
where
\be
r= \sqrt {t} \left (  k - (2an+|x|+|y-a|+a)/2t \right ) \, .
\ee
then
\be
A_{2,n} 
&=& (2i)^{-n-1} t^{n/2}  e^{i(an+(|x|+|y-a|+a)/2)^2/t}   \int_{\R }  e^{-ir^2 } \frac {1}{(r+f_n^2)^{n+1}} dr \\
& =& (2i)^{-n-1} t^{n/2}  e^{i(an+(|x|+|y-a|+a)/2)^2/t}   \left [ -i (-2i)^{n/2}\sqrt {2\pi} e^{-i(f_n^2)^2/2}D_{-n-1} ((1-i)f_n^2) \right ]  \\
&=&\sqrt {\pi/2}  (-1)^{n+1} (it/2)^{n/2} e^{-i(f_n^2)^2/2} g_n^2 D_{-n-1} ((1-i)f_n^2) \, .
\ee
In conclusion (\ref {Formula35}) follows.

Similarly, we obtain the expression for $U_3$ and $U_4$. \ Eventually, Theorem \ref {Teo1} follows.

\end {document}